\newtheorem{Theorem}{Theorem}[section]
\newtheorem{condition}{Condition}
\newtheorem*{remark}{Claim}
\newtheorem{proposition}[Theorem]{Proposition}
\newtheorem{theorem}{Theorem}[section]
\newtheorem{assumption}{Assumption}[section]
\theoremstyle{definition}
\newtheorem{definition}[Theorem]{Definition}
\theoremstyle{remark}
\numberwithin{equation}{section}
\numberwithin{figure}{section}
\numberwithin{table}{section}
\newcommand{\bi}{ \begin{itemize}  }
\newcommand{\ei}{\end{itemize}}
\newcommand{\bfor}{ \begin{eqnarray*} }
\newcommand{\efor}{\end{eqnarray*}}
\begin{document}

\title{Nonparametric Value-at-Risk via Sieve Estimation}
\author{Philipp Ratz$^{a}$  \\\\
        \small $^{a}$ Université du Québec à Montréal (UQAM), Montréal (Québec), Canada \\\\
        \small \tt{ratz.philipp@courrier.uqam.ca}
}

\maketitle

\begin{abstract}
	Artificial Neural Networks (ANN) have been employed for a range of modelling and prediction tasks using financial data. However, evidence on their predictive performance, especially for time-series data, has been mixed. Whereas some applications find that ANNs provide better forecasts than more traditional estimation techniques, others find that they barely outperform basic benchmarks. The present article aims to provide guidance as to when the use of ANNs might result in better results in a general setting. We propose a flexible nonparametric model and extend existing theoretical results for the rate of convergence to include the popular Rectified Linear Unit (ReLU) activation function and compare the rate to other nonparametric estimators. Finite sample properties are then studied with the help of Monte-Carlo simulations to provide further guidance. An application to estimate the Value-at-Risk of portfolios of varying sizes is also considered to show the practical implications. 
	\\[15mm]
	{\bf Keywords:} Sieve Estimation, Neural Networks, High-Dimensional Estimation, Value-at-Risk, Nonparametric Estimation, Semiparametric Estimation
		\\[3mm]
		\\[3mm]
\noindent{\bf Acknowledgement and Notes}: The author would like to thank Arthur Charpentier for his help and constructive comments. \\
All data and code can be found on \href{https://github.com/phi-ra/nonparametric_var}{https://github.com/phi-ra/nonparametric\_var}
\end{abstract}

\section{Introduction}

Often, economic or financial theory does not suggest a specific functional form on, for example, time series models. For the purpose of simplicity a (parametric) linear form is still the go-to standard in many applications. This seems contrary to the fact that the empirical evidence suggests that many economic time series seem to have nonlinear forms (see eg. \cite{HFD2006} for examples of financial phenomena). Nonlinear extensions of simple parametric models exist, but they often suffer from misspecification, as a functional form must still be imposed. In recent years, due to the increased availability of data and computing resources, nonparametric approaches have gained popularity, see for example \cite{chen2005nonparametric}, \cite{chen2008nonparametric} or \cite{scaillet2005nonparametric} for nonparametric estimation of risk indicators	or \cite{tzeremes2018financial} for an example of economic time series. Their model-free nature provides the possibility to estimate and forecast without many assumptions as in standard parametric models. This does not come without cost though, nonparametric methods suffer from the curse of dimensionality\footnote{Roughly, this means that nonparametric estimators tend to converge very slowly in higher dimensional problems}, which makes them inaccurate in high exactly the type of problems that are often encountered in financial applications.

%

Artificial Neural Networks (ANNs or simply Neural Networks) allow for nonlinear models without explicitly specifying the functional form and have become increasingly popular in recent years. Practical applications of ANNs in finance are by no means new though, the earliest surveys were already conducted in the 1990s (see eg. \cite{wong1998neural}). Whereas earlier models were rather simple, recent applications became increasingly complex and applicable to a wider range of tasks. For example, earlier approaches such as \cite{swanson1995model} used simple, single-hidden layer networks later on, eg. \cite{zhanggp2003} combined a parametric model and an ANN for its residuals to get a more complex model that provided more accurate forecasts than each model by itself. More recently, \cite{deep_mortgage_risk} explored increasingly complex networks in both terms of "depth" and combination of multiple sub-models to build a classifier for mortgage risk. Applications to specific tasks in Finance can for example be found in \cite{xu2016quantile} who modelled the Value-at-Risk (VaR) using an ANN or \cite{bucci2020realized} who forecasted the realized volatility and compared the predictive performance of traditional econometric methods with a range of different ANNs. Most of the applications find, that allowing for nonlinearities indeed increases the predictive performance when compared to standard linear models. 

The results from these articles stand somewhat in contrast to what was found in a large practical application summarised in \cite{makridakis2020m4} where submissions from the \emph{M4} forecasting competition were considered. There, pure Machine-Learning based nonparametric models were found to perform worse than some even some simple benchmarks. In their article, they argue that more should be done to understand why this might be the case. Interestingly though, in the \emph{M4} competition, combinations of estimators were found to perform best, in line with the findings of for example \cite{zhanggp2003} and \cite{deep_mortgage_risk}. Further, one of the main findings of \emph{M4} was that using information from other, potentially related, time-series substantially improved the forecasting accuracy when using nonparametric ML models. This on the other hand would seem counter-intuitive when considering that nonparametric techniques suffer from the curse of dimensionality\footnote{The best-performing model was indeed a form of hybrid ANN}. This serves as motivation to explore possible reasons for the apparent contradictions and propose a general framework that combines both a parametric statistical model with an ANN into a semiparametric model.


To investigate the statistical properties of the model, this article aims to dive slightly deeper into the theoretical foundations of ANNs than most applications and provides a convergence order for a semiparametric "hybrid" model. The results extend the findings of \cite{chen_racine_2001} to include ANNs that are modelled with the so-called ReLU activation functions. As will be shown, the use of the ReLU activation function has desirable properties when working with deeper networks, which was also shown in eg. \cite{deep_mortgage_risk} to lead to better predictive performance\footnote{A simple example as to why this might be the case, is given in the code repository}. The remainder of the article is organized as follows: Section \ref{sec:formal} formalizes the neural network as a sieve estimator which is then used to derive the convergence order. Section \ref{sec:finite} extends the theoretical results and studies the convergence order on finite samples, Section \ref{sec:portfolio} demonstrates the use of the results for forecasting value-at-risk during stress periods that often entail significant deviations from a linear form. Section \ref{sec:conc} then concludes. 

%
%
%
%
%
%
%
%
%
%
%
%
%
%
%

\section{Formalization of the Estimation}\label{sec:formal}

ANNs are often treated as a separate model class, but they bear close similarities to other known econometric estimators, \cite{kuan_white_1994} provide an excellent comparison of ANNs and more well-known econometric models. ANNs themselves are a diverse family of estimators that are linked by a common concept. In what follows, we will focus on the simplest form of the ANN, the feedforward, fully connected, multilayer perceptron (MLP). This has several reasons: keeping the model simple permits the usage of existing theoretical convergence results and heuristically, even if different ANN Architectures lead to better results than those of the MLP, the general findings should still be applicable. To establish convergence results, this section first defines ANNs as sieve estimators, which essentially follows \cite{handbook_econometrics_chen}. Next, in the spirit of the findings of \cite{makridakis2020m4}, a semiparametric model is proposed and the convergence results for both the nonparametric and the parametric part are presented and compared to other nonparametric estimators. 

\subsection{The basic form of the MLP}

We consider the simplest form of the MLP, where only one "hidden layer" is present in the model. To start with, consider a model that takes the inputs $x_i, i=1,...,p$ and maps them to some output units $j, j=1,...,\nu$. Each input is then weighted a factor $\gamma_{ij} \in \mathbb{R}$. In the output unit $j$ the weighted inputs are then combined by a simple additive rule, so each unit produces the following output:

$$ \sum_{i=1}^{p} x_i\gamma_{ij}, \;\; j = 1,...,\nu $$

For simplicity, we define  $\tilde{x}$ as the $x$-vector with an additional "bias unit" $x_0=1$, which corresponds to the intercept in a linear equation. Rewriting the above we get the general form of the input combination:

\begin{equation}\label{eq:linear_perceptron}
	f_j(x,\gamma) \equiv \tilde{x}^\top\gamma_j, \;\; j=1,...,\nu
\end{equation}

Note that \eqref{eq:linear_perceptron} corresponds to the standard linear regression model when $\nu=1$ (ie. only one output unit). In equation \eqref{eq:linear_perceptron} the output unit simply linearly summed up the weighted inputs, but the sum may be transformed by a so called \emph{activation function}. This yields the \textit{single unit perceptron}, which with the right activation function is akin to a generalized linear model (GLM). What differentiates ANNs from GLMs is that they usually contain "hidden" layers. That is, arranged layers of response units between the input and output layers, that are not directly observable. These hidden layers treat the output of the previous layer as their inputs and induce nonlinearities by transforming the input through an activation function. For the theoretical considerations we treat the model with a single hidden layer and a single ouput unit (ie. $\nu=1$), 

\begin{equation}\label{eq:single_hidden_nn}
	f(x, \theta)= F\left(\beta_{0} + \sum_{k=1}^{m} G(\tilde{x}^\top\gamma_k)\beta_{k}\right),
\end{equation}

where $m$ is the number of the so called hidden units (processing units as the $\nu$ in \eqref{eq:linear_perceptron}) in the hidden layer. $\beta_k, \; k=0,...,m$ are the connection weights from hidden unit $k$ to the (in this case single) output unit, $\theta=(\beta_0,\dots, \beta_m, \gamma_1^\top,...,\gamma_m^\top)^\top$ is the vector of all network weights and $G$ is the activation function. For ease of notation we will set $F$ equal to the identity function (which is generally done when faced with a regression task) unless otherwise stated. This means it will be omitted from further notation. Figure \ref{fig:single_layer_NN} illustrates the architecture of a GLM and a single hidden layer MLP to underline their similarities. 

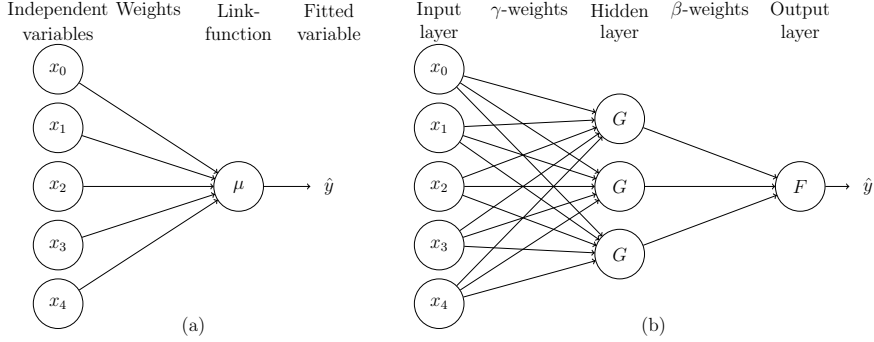
\begin{figure}
	\begin{center}
		\scalebox{0.6}{\begin{tikzpicture}

\tikzstyle{place}=[circle, draw=black, minimum size = 11mm]

\draw node at (0, -1.3) [place] (first_1) {$x_0$};
\draw node at (0, -2*1.3) [place] (first_2) {$x_1$};
\draw node at (0, -3*1.3) [place] (first_3) {$x_2$};	
\draw node at (0, -4*1.3) [place] (first_4) {$x_3$};
\draw node at (0, -5*1.3) [place] (first_5) {$x_4$};		

\draw node at (4, -3*1.3) [place] (second_2) {$\mu$};

\draw node at (6, -3*1.3) [circle] (fourth_2) {$\hat{y}$};

\foreach \i in {1,...,5}
\foreach \j in {2,...,2}
\draw [->] (first_\i) to (second_\j);

\foreach \i in {2,...,2}
\foreach \j in {2,...,2}
\draw [->] (second_\i) to (fourth_\j);

\node at (0, 0) [black, ] {Independent};
\node at (0, -0.5) [black, ] {variables};
\node at (2, 0) [black, ] {Weights};
\node at (4, 0) [black, ] {Link-};
\node at (4, -0.5) [black, ] {function};
\node at (6, 0) [black, ] {Fitted};
\node at (6, -0.5) [black, ] {variable};
\node at (3, -7)[black, ] {(a)};

\end{tikzpicture}
\qquad
\begin{tikzpicture}

\tikzstyle{place}=[circle, draw=black, minimum size = 11mm]

\draw node at (0, -1.3) [place] (first_1) {$x_0$};
\draw node at (0, -2*1.3) [place] (first_2) {$x_1$};
\draw node at (0, -3*1.3) [place] (first_3) {$x_2$};	
\draw node at (0, -4*1.3) [place] (first_4) {$x_3$};
\draw node at (0, -5*1.3) [place] (first_5) {$x_4$};

\draw node at (4, -2*1.2) [place] (second_1) {$G$};
\draw node at (4, -3*1.3) [place] (second_2) {$G$};
\draw node at (4, -4*1.35) [place] (second_3) {$G$};

\draw node at (8, -3*1.3) [place] (fourth_2) {$F$};

\draw node at (9.5, -3*1.3) [circle] (output_2) {$\hat{y}$};

\foreach \i in {1,...,5}
\foreach \j in {1,...,3}
\draw [->] (first_\i) to (second_\j);

\foreach \i in {1,...,3}
\foreach \j in {2,...,2}
\draw [->] (second_\i) to (fourth_\j);

\foreach \i in {2,...,2}
\draw [->] (fourth_\i) to (output_\i);

\node at (0, 0) [black, ] {Input};
\node at (0, -0.5) [black, ] {layer};
\node at (2, 0) [black, ] {$\gamma$-weights};
\node at (4, 0) [black, ] {Hidden};
\node at (4, -0.5) [black, ] {layer};
\node at (6, 0) [black, ] {$\beta$-weights};
\node at (8, 0) [black, ] {Output};
\node at (8, -0.5) [black, ] {layer};
\node at (4.75, -7) [black, ] {(b)};

\end{tikzpicture}}
		\caption[Schematic pepresentation of two perceptrons]{(a): A single unit perceptron (GLM). (b): A single hidden layer ANN}
		\label{fig:single_layer_NN}
	\end{center}
\end{figure} 

Models with more hidden layers can then be understood as a chained version of \eqref{eq:single_hidden_nn} where successive hidden layers consider the output of the previous hidden layer as inputs. This nested nature of more complex models made gradient descent the method of choice for the estimation. Unfortunately though, this practically limits the choice for the activation function. In previous theoretical research the discussion was mostly omitted, as the universal approximation property (see the subsequent section \ref{th:approximation_property}) can be employed by most of them. Whereas the logistic function was mostly used in earlier applications, newer deeper networks mostly rely on the Rectified Linear Unit (ReLU) activation function which is defined as:
$$
f(x) = \max(0,x)
$$

The reason for this is a property called the \emph{vanishing gradient problem}. For the moment, we only considered single layer networks, but just by considering Figure \ref{fig:single_layer_NN} and Equation \eqref{eq:single_hidden_nn} the problem can be seen easily. In the backward pass of the gradient descent, the gradient will also depend on the derivative of the activation function. If there are multiple intermediate layers, the gradient will naturally multiply. In the case of the (formerly popular) logistic function, the derivative peaks at $0.25$. In deeper networks, the effect is exacerbated and the gradient quickly becomes too small for any meaningful optimization. ReLU on the other hand has a constant derivative and results in more stable results. Theoretical results have already been derived for the logistic function and the radial basis function but to the best of our knowledge not for the ReLU activation function. In what follows, we will extend the general results from eg. \cite{chen_racine_2001} to include ANNs that make use of the ReLU activation function.  

\subsection{Universal Approximation Property}\label{th:approximation_property}

The reason ANNs are very flexible and adapt to many problems is that they possess the \textit{universal approximation property}. Among others, \cite{Hornik_approximation} have established, that ANNs with suitably many hidden units and layers can approximate any function arbitrarily well. More recently \cite{approximation_relu} established the universal approximation theorem also holds for single hidden layer ANNs with unbounded activation functions such as the ReLU. To make things clearer also throughout the next chapters, we adopt the definition of denseness and convergence from \cite{Hornik_approximation}:

\begin{definition}[Denseness and uniform convergence] \label{def:denseness}
	A subset $S$ of a metric space $(X, \rho)$ is $\rho$-\textit{dense} in a subset $T$ if for every $\varepsilon > 0$ and for every $t \in T$ $\exists s \in S$ such that $\rho(s,t) < \varepsilon$. To establish the convergence for functions, we define further: Let $C^d$  be the set of continuous functions from $\mathbb{R}^{d}$ to $\mathbb{R}$ and $K \subset \mathbb{R}^{d}$ a compact subset. For $f,g \; \in C^d$ let $\rho_K(f,g)\equiv\sup_{x \in K}|f(x) - g(x)|$. A subset $S$ in $C^d$ is then said to be \textit{uniformly dense on compacta} in $C^d$ if for every $K$, $S$ is $\rho_K$ dense in $C^d$. Further, a sequence of functions $\{f_n\}$ converges to a function $f$ uniformly on compacta if for all (compact) $K\subset \mathbb{R}^{d}$, $\rho_K(f_n,f) \rightarrow 0$ as $n \rightarrow \infty$
\end{definition}

The following theorem then summarises the findings of \cite{Hornik_approximation} and \cite{approximation_relu}:

\begin{theorem}[Universal approximation theorem] \label{theorem:universal_approx_theorem}
	Let $G(\cdot)$ be either a squashing function (mapping arbitrary input to a real (nondecreasing) bounded value) or an unbounded rectified linear unit (ReLU) activation function. An ANN of the form of \eqref{eq:single_hidden_nn} is, for every $d \in \mathbb{N}$ and $G(\cdot)$, uniformly dense on compacta (\cite{Hornik_approximation} theorem 2.2). 
\end{theorem}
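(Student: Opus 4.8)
The plan is to reduce the statement to a single density claim and then dispatch the two activation-function cases separately. Since we take $F$ to be the identity, a network of the form \eqref{eq:single_hidden_nn} is exactly an element of the class
$$
\Sigma^d(G) \;=\; \Bigl\{\, x \mapsto \beta_0 + \sum_{k=1}^{m} \beta_k\, G(\tilde{x}^\top \gamma_k) \;:\; m \in \mathbb{N},\; \beta_k \in \mathbb{R},\; \gamma_k \in \mathbb{R}^{d+1} \,\Bigr\},
$$
so it suffices to show that for every compact $K \subset \mathbb{R}^d$ the set $\Sigma^d(G)$ is $\rho_K$-dense in $C^d$, in the sense of Definition \ref{def:denseness}. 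I would isolate this as a lemma and prove it in the two cases; as noted in the statement, each case can alternatively be quoted verbatim from \cite{Hornik_approximation} (Theorem 2.2) and \cite{approximation_relu}, but a sketch is worth giving.

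\textbf{Squashing case.} Here I would follow Hornik's route rather than argue directly. First, show that the cosine network class $\Sigma^d(\cos)$ is $\rho_K$-dense in $C^d$: the functions $x \mapsto \cos(\tilde{x}^\top\gamma)$ together with the product-to-sum identity $\cos a\,\cos b = \tfrac12\bigl(\cos(a+b)+\cos(a-b)\bigr)$ generate a subalgebra of $C(K)$ that contains the constants and separates points, so Stone--Weierstrass applies. Second, show that a squashing function $G$ approximates $\cos$ uniformly on any bounded interval: finitely many rescaled and translated copies of $G$ approximate a staircase function, and staircase functions approximate $\cos$ uniformly on compact intervals. Composing ridge functions, $\Sigma^d(G)$ then approximates each element of $\Sigma^d(\cos)$ uniformly on $K$, and chaining the two approximations yields $\rho_K$-density of $\Sigma^d(G)$.

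\textbf{ReLU case.} Here $G(x)=\max(0,x)$ is continuous and non-polynomial, so the claim is an instance of the general non-polynomial-activation criterion of \cite{approximation_relu}; it also admits a more hands-on argument. A continuous piecewise-linear function of one variable can be written \emph{exactly} as $g(t) = a + bt + \sum_j c_j \max(0, t - t_j)$, and since $t = \max(0,t) - \max(0,-t)$, every ridge function $x \mapsto g(w^\top x)$ with $g$ piecewise linear lies in $\Sigma^d(G)$; letting the mesh of the breakpoints shrink, every \emph{continuous} ridge function is $\rho_K$-approximable by $\Sigma^d(G)$. It then remains to recall the classical fact that finite sums of continuous ridge functions are $\rho_K$-dense in $C^d$, which itself follows from the Weierstrass theorem together with the polarization identity expressing the monomial $(w^\top x)^n$ as a linear combination of $n$-th powers of linear forms. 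Composing the two approximations gives $\rho_K$-density of $\Sigma^d(G)$.

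\textbf{Main obstacle.} The genuinely delicate point in both cases is the passage from one-dimensional approximation power to $d$-dimensional density, and the two cases handle it differently. For squashing functions it is absorbed into the Stone--Weierstrass argument for $\Sigma^d(\cos)$, leaving only the need to make the approximation of $\cos$ by combinations of $G$ \emph{uniform on $K$} as the frequency of the cosine grows. For ReLU the obstacle is the density of ridge functions --- equivalently, that the closed linear span of $\{(\tilde{x}^\top\gamma)_+ : \gamma\}$ contains every polynomial in $x$ --- whose cleanest justification (for general non-polynomial $G$) goes through finite differences converging to derivatives and requires controlling that limit uniformly on $K$. In the write-up I would lean on \cite{Hornik_approximation} and \cite{approximation_relu} for precisely these two steps rather than reproving them in full.
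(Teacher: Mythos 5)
Your proposal is correct and matches the paper's treatment: the paper offers no proof of its own and simply quotes \cite{Hornik_approximation} (Theorem 2.2) for squashing activations and \cite{approximation_relu} for the ReLU case, exactly the two results you lean on for the delicate steps. The sketches you add (Stone--Weierstrass via the cosine subalgebra plus staircase approximation of $\cos$ by a squashing $G$; exact piecewise-linear representation $a+bt+\sum_j c_j(t-t_j)_+$ plus density of ridge functions for ReLU) are faithful outlines of how those cited theorems are actually proved, so they supplement rather than diverge from the paper's argument.
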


Further, for squashing (\cite{Hornik_approximation}) and ReLU (\cite{approximation_relu}) activation functions, the ANN from \eqref{eq:single_hidden_nn} converges uniformly on compacta. For ANNs with ReLU activations the number of hidden units is finite. 

This establishes the asymptotic approximation properties but still leaves open the choice of the number of hidden units for a concrete problem. In theory, "enough" hidden units will ensure that the estimator can approximate any function. For example, we can simulate data according to:

\begin{equation}\label{apB:model_chaos}
	y_{t} = 0.3 y_{t-1} + \frac{22}{\pi}\sin(2\pi y_{t-1} + 0.\overline{33})
\end{equation}

and fit an ANN on the points. Figure \ref{fig:mseperunitsplot} illustrates the approximation property of the single-hidden layer ANN with an increasing number of hidden units. Note though that the model did not contain any random variation, which means there is no ''downside'' to adding more hidden units. In practice, with noisy data, there is a tradeoff between the bias and variance of an estimation, which can be derived with the help of sieve estimation. 

\begin{figure}\label{fig:mseperunitsplot}
	\centering
	\includegraphics[width=0.7\linewidth]{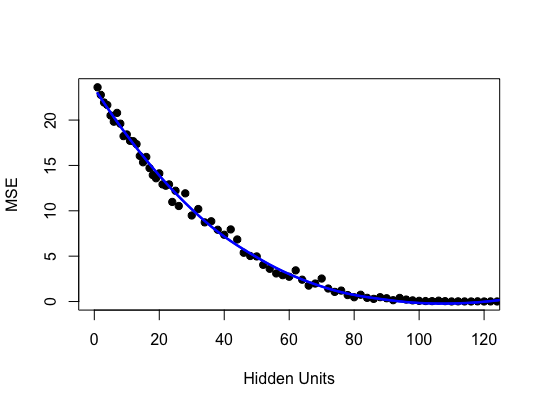}
	\caption[MSE with increasing number of hidden units]{Illustration of the decreasing Mean Squared Error (MSE) with increasing number of hidden units. All simulations were calculated with 200 epochs (passes of the dataset). Note that the MSE does not decrease strictly with the number of hidden units. This is an indicator that the gradient descent algorithm got stuck in a local minimum.}
\end{figure}

\subsection{Sieve Estimation}\label{th:sieve_estimation}

To establish convergence results, the ANN models can be incorporated into the theory of \emph{sieve estimation}. The method is generally attributed to \cite{grenander_1981}.  In semi- or nonparametric regression settings, the set of parameters  $\theta$ that describe a certain process is situated in a not necessarily finite parameter space, which makes the optimization problem no longer well posed (\cite{handbook_econometrics_chen}). The method of sieves approximates the infinite parameter space with a sequence of approximating parameter spaces that will be dense in the original (possibly infinite) space in the limit. To illustrate the approach we take a generic time series model of the form:

\begin{equation}\label{eq:generic_timeseries_model}
	y_t = \psi(x_{t}) + u_t, \quad t=1,...,n,
\end{equation}

where $x_{t}$ is a $q \times 1$ vector (with possibly lagged values of $y_t$ and $u_t$ is random noise with $\mathbb{E}[u_t|x_t]=0$ and $\mathbb{E}[u_t^2|x_t]=\sigma^2$. If we take the least squares approach, the optimal estimator can be written as:

$$
\hat{\psi}(\cdot) = \arg\min_{\psi(\cdot) \in \Psi}\frac{1}{n}\sum_{t=1}^{n}(y_t - \psi(x_{t}))^2
$$

In parametric settings the functional space of $\Psi$ is typically strictly limited to an additive form with the parameter vector bound in some space $\mathbb{R}^q$. Grendanders idea consists of estimating $\psi$ by a parametric class of increasingly complex sieve-functions, in the case for the neural network defined in \eqref{eq:single_hidden_nn} that is:

\begin{equation}\label{eq:sieve_functionspace}
	\psi \in \Psi = \bigg\{f: f(x; \theta)= \sum_{k=1}^{r_n}\beta_k G(\tilde{x}^\top \gamma_k) \bigg\}\;,
\end{equation}


where the activation functions $G$ are a sequence of basis functions and $r_n$ denotes the number terms in the sieve space when we have a sample of size $n$. Given \ref{th:approximation_property}, $f$ can then approximate the true function $\psi(\cdot)$ arbitrarily well with increasing $r_n$\footnote{This of course is not limited to ANN type sieves, for example, with $G$ a polynomial function of degree $r_n$ the Stone-Weierstrass theorem can be employed directly.}. The idea is then, to let $r_n$ increase with $n$, this feature distinguishes the estimation technique from standard parametric techniques, which assume a fixed and finite dimensional parameter space. The function $f$ is still parametrized by $\theta \in \Theta_n$ (the sieve-space). To properly describe $\Theta_n$ we define the infinite dimensional parameter space of the nonparametric problem as $\Theta$ with a metric $d$ and its corresponding approximation space $\Theta_n$. To have well defined properties, we let the approximation spaces be compact and nondecreasing in $n$ and all should be $\subseteq \Theta$. The sieve estimator is then given by:

\begin{center}
	\begin{equation} \label{eq:sieve_estimator}
		\hat{\psi}_n(\cdot) = \arg\min_{\theta \in \Theta_n}\frac{1}{n}\sum_{t=1}^{n}(y_t - \psi_n(x_{t}))^2 \;,
	\end{equation}
\end{center}

with the definition of $\psi_n$ from above. This then becomes a feasible estimation problem because in the parameter space of $f(\cdot, \theta)$ the optimization is well defined. Note that in order to achieve consistency, we need the following assumption to hold (adapted from \cite{newey_1997} for series estimation):

$$
\frac{1}{r_n} \rightarrow 0 \; \text{and} \; \frac{r_n}{n} \rightarrow 0\text{ as } n \rightarrow \infty
$$

Which can be closely compared to the conditions of nonparametric kernel regression. Also note that the requirement on the left hand side ensures that the parameter space is infinite in the limit of $n$ but grows slower than the sample size (right condition). The simple but powerful form also shows why ANNs are a popular choice in modelling financial data. Restrictions such as additivity or previous knowledge on the distribution of errors, such as excess kurtosis, can be imposed (cf. \cite{handbook_econometrics_chen}) just as well as certain shape constraints such as concavity (cf. \cite{handbook_financial_torben} and could be used to improve the resulting model. Due to its parametric form, it also yields parameter estimates, which will be useful if the results are to be (at least partially) interpretable. Formally, the estimation process with ANNs can then be described (as for example in \cite{handbook_econometrics_chen}):

\begin{definition}[Approximate Sieve Extremum Estimate]
	Define $Q$ as some population criterion function $Q:\Theta \rightarrow \mathbb{R}$, which is uniquely maximised at $\theta_0 \in \Theta$. $\widehat{Q}_n:\Theta \rightarrow \mathbb{R}$ is its sample equivalent based on the data $\{X_t\}_{t=1}^n$, such that it converges to the true $Q$ with $n \rightarrow \infty$. Then, $\hat{\theta}_n$ is defined as the approximate maximizer of $\hat{Q}_n(\theta)$ over the sieve space $\Theta_n$. that is:
	
	\begin{equation}\label{eq:approximate_sieve_est}
		\widehat{Q}_n(\hat{\theta}_n) \geq \sup_{\theta \in \Theta_n}\widehat{Q}_n(\theta) - \mathcal{O}(\eta_n), \; \textnormal{with} \; \eta_n \rightarrow 0 \; \textnormal{as} \; n \rightarrow \infty \;,
	\end{equation}
	where $\eta_n$ denotes an approximation error in the sample criterion. $\mathcal{O}$ denotes the standard big-O notation\footnote{$f(n)=\mathcal{O}(a_n)$ simply means $\exists M < \infty$ s.t. $|f(n)| \leq Ma_n, \forall n \in \mathbb{N}$ (eg. the data set size).}.
\end{definition}

 Put simply, this means that the estimated maximiser over the sieve space converges to the supremum of the sample criterion over the sieve space. The equation can be interpreted as a simplification of the original problem that was situated in the (possibly infinite) dimensional parameter space $\Theta$ and now belongs to the (finite but variable - sized) parameter space $\Theta_n$. With the above stated condition that the sieve spaces are nondecreasing, we use the projection $\pi_n$ and define $\pi_n\theta_0 \in \Theta_n$ such that $\varphi_n \equiv d(\theta_0, \pi_n\theta_0) \rightarrow 0$ as $n \rightarrow \infty$, which is the \textit{approximation error} due to the lower dimension of the sieve space compared to the original parameter space. Defining the sample equivalent with $n$ observations of $y_t\in\mathbb{R}$, $x_t \in \mathbb{R}^d$ and the function $l=-(y_t - \psi(\theta, x_t))^2$, we can rewrite the optimization problem as:

\begin{center}
	\begin{equation}\label{eq:sieve_m_formula}
		\sup_{\theta \in \Theta_n}\widehat{Q}_n(\theta)=\sup_{\theta \in \Theta_n}\frac{1}{n}\sum_{t=1}^{n}l(\theta, (y_t, x_{t}')')
	\end{equation}
\end{center}

This specifies the sieve estimator as an M-estimator, which allows us to give some interpretation to the loss function. Instead of just maximizing the objective function (which would mean minimizing the loss), we are maximising the empirical criterion function $\widehat{Q}_n$ over the (simpler) approximation space $\Theta_n \subset \Theta$.

\subsection{Sieve Order} \label{th:sieve_order}

When we consider the functional sieve form of equation \eqref{eq:sieve_functionspace} it becomes clear that the ANN of equation \eqref{eq:single_hidden_nn} can be stated as a sieve estimator (by keeping $F(\cdot)$ the identity and letting the number of hidden units grow with $n$\footnote{To make the change to a variable number of hidden units explicit, we will adopt the term $r_n$ to indicate the number of hidden units in line with our considerations about sieve estimation from above.}). Given the universal approximation property from theorem \ref{theorem:universal_approx_theorem}, we also know that in the limit, ANNs can approximate any function, which guarantees that the sieve problem is well defined. The choice of the sieve order $r_n$ then becomes the choice of the number of hidden units in the ANN\footnote{As will be discussed later on, the number of hidden units can be regularized. For the theoretical considerations the number of hidden units can be considered the number of non-zero hidden units}. As it is the case across all nonparametric methods, ANN sieves also have a bias-variance tradeoff, that can be controlled by the number of hidden units (or more formally, by the number of sieve terms). Although sieve estimators are parametrized, they approximate a function of unknown form. We will therefore first illustrate the tradeoff using the mean integrated squared error (MISE) to make the relation to other nonparametric function approximations clear. For a function with variable $x$ with density $f_X$ the definition of the MISE is:

$$
\text{MISE}=\mathbb{E}||f_n - f||_2^2=\mathbb{E}\int \big(f_n(x)- f(x)\big)^2 dx\;,
$$

where $f_n(x)$ is the ANN approximation of the sieve order $r_n$ and $f$ the function of interest. To derive the following proposition we use the property of ANNs that they represent the solution of a linear equation in the output layer. This allows us to exploit some known properties of linear models. For example, we can assume that the solution to the optimization problem in the final hidden layer can be found with a standard linear technique. Therefore, we define $g_{k}(x)$ the (nonlinearly) transformed ($n \times 1$) output produced by the $k^{th}$ unit in the final hidden layer and $\mathcal{G}(x)$ the ($n \times r_n $) corresponding matrix of all $r_n$ units and $\mathcal{G}(x_t)$ the $1 \times r_n$ transformed feature vector for observation $t$. We can then follow the methodology of \cite{hansen_2014_hb} who considers spline based sieves and define our model from equation \eqref{eq:generic_timeseries_model} and the corresponding (ANN) estimation as:

$$
\psi_n =  f_n(x,\theta)=\beta_0 + \sum_{k=1}^{r_n}\beta_kg(x)=\mathcal{G}(x)\beta_n \;,
$$
$$
\hat{\beta}_{n}= (\mathcal{G}(x)'\mathcal{G}(x))^{-1}\mathcal{G}(x)'Y \;,
$$

where we include the $\gamma$ weights implicitly as defined above and allow the first column of $\mathcal{G}(x)$ to be a constant. Also note that we indexed $\beta_n$ because the parameter-vector will change its size with increasing sieve order. Further, contrary to the standard linear case, where the only error is the projection error, we have an additional \emph{approximation error} in $\hat{\beta_n}$. We define the approximation error and its squared expectation as:

\begin{equation}\label{eq:definition_approximation_error}
	\upsilon_n(x_t)=\psi(x_t) - \mathcal{G}(x_t)\beta_n \;,
\end{equation}
$$
\varphi_{n}^2 = \mathbb{E}[\upsilon_n(x_t)^2]=\int \upsilon_n(x)^2f_X(x)dx
$$

Taking this together and following the methodology of \cite{hansen_2014_hb}, we are then able to propose the following (proof in appendix \ref{apA:bias_variance}):

\begin{proposition}[Bias-Variance Tradeoff for ANN Sieves]
	
	For model \eqref{eq:single_hidden_nn} with the identity activation function for the output layer we have a bias-variance tradeoff which can be summarized by
	$$
	\text{MISE}_n(x)=\mathbb{E} \int \big (\hat{\psi_n}(x) - \psi(x) \big )^2dx = \varphi_{n}^2 + n^{-1}trace(\mathcal{E}_{n}^{-1}\Omega_n) 
	$$ 
	$$
	\mathcal{E}_n = \mathbb{E}[\mathcal{G}(x_t)'\mathcal{G}(x_t)], \;\; \Omega_n = \mathbb{E}[\mathcal{G}(x_t)'\mathcal{G}(x_t)\sigma_t^2], \;\; \sigma_t^2 = \mathbb{E}[u_t^2|X_t]
	$$
	\label{theorem:bias_variance}
\end{proposition}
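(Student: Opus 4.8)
The plan is to exploit the linear-in-parameters structure of the output layer (with $F$ the identity, so that $\psi_n$ is linear in $\beta$), exactly in the spirit of \cite{hansen_2014_hb}: treat $\mathcal{G}(x)$ as a growing $n\times r_n$ design matrix and reduce the computation of $\text{MISE}_n$ to a least-squares error decomposition in dimension $r_n$. Throughout I read the integral $\int(\cdot)\,dx$ as the $f_X$-weighted $L^2$-norm — equivalently, evaluation at a fresh draw $x\sim f_X$ independent of the sample, followed by expectation — which is the reading consistent with $\varphi_n^2=\int\upsilon_n(x)^2f_X(x)\,dx=\mathbb{E}[\upsilon_n(x_t)^2]$. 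First I would fix the reference coefficient $\beta_n$ to be the population ($L^2(f_X)$-projection) least-squares vector $\beta_n=\mathcal{E}_n^{-1}\mathbb{E}[\mathcal{G}(x_t)^\top\psi(x_t)]$, so that by construction $\mathbb{E}[\mathcal{G}(x_t)^\top\upsilon_n(x_t)]=0$, i.e.\ the approximation residual $\upsilon_n$ is orthogonal to the sieve regressors. Writing $Y=\Psi+U$ with $\Psi=(\psi(x_1),\dots,\psi(x_n))^\top$, and $\Psi=\mathcal{G}(x)\beta_n+\Upsilon_n$ with $\Upsilon_n=(\upsilon_n(x_1),\dots,\upsilon_n(x_n))^\top$, the estimator error is $\hat\beta_n-\beta_n=(\mathcal{G}(x)^\top\mathcal{G}(x))^{-1}\mathcal{G}(x)^\top(U+\Upsilon_n)$, and at the fresh point $x$ one has $\hat\psi_n(x)-\psi(x)=\mathcal{G}(x)(\hat\beta_n-\beta_n)-\upsilon_n(x)$.

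Next I would expand the square and integrate. Taking the expectation over the fresh $x$ first, the cross term vanishes because $\mathbb{E}_x[\mathcal{G}(x)^\top\upsilon_n(x)]=0$, which leaves $\text{MISE}_n=\mathbb{E}\big[(\hat\beta_n-\beta_n)^\top\mathcal{E}_n(\hat\beta_n-\beta_n)\big]+\varphi_n^2$. In the first term I would replace $(\mathcal{G}(x)^\top\mathcal{G}(x))^{-1}$ by $n^{-1}\mathcal{E}_n^{-1}$ (a law of large numbers for the Gram matrix), reducing it to $n^{-2}\,\mathbb{E}\big[(U+\Upsilon_n)^\top\mathcal{G}(x)\mathcal{E}_n^{-1}\mathcal{G}(x)^\top(U+\Upsilon_n)\big]$, and split into the $UU$, $U\Upsilon_n$, and $\Upsilon_n\Upsilon_n$ pieces. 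Conditioning on $x_1,\dots,x_n$ and using $\mathbb{E}[u_t\mid x_t]=0$ kills the $U\Upsilon_n$ piece; the trace identity $\mathbb{E}[U^\top A U\mid X]=\mathrm{trace}(A\,\mathbb{E}[UU^\top\mid X])$ together with $\mathbb{E}[u_tu_s\mid X]$ equal to $\sigma_t^2$ when $t=s$ and $0$ otherwise turns the $UU$ piece into $n^{-2}\cdot n\,\mathrm{trace}(\mathcal{E}_n^{-1}\Omega_n)$; and $n^{-2}\,\Upsilon_n^\top\mathcal{G}(x)\mathcal{E}_n^{-1}\mathcal{G}(x)^\top\Upsilon_n=\big\|n^{-1}\mathcal{G}(x)^\top\Upsilon_n\big\|_{\mathcal{E}_n^{-1}}^2$ is of smaller order than $\varphi_n^2$ because $n^{-1}\mathcal{G}(x)^\top\Upsilon_n\to\mathbb{E}[\mathcal{G}(x_t)^\top\upsilon_n(x_t)]=0$ by the projection property. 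Collecting the surviving terms yields $\varphi_n^2+n^{-1}\mathrm{trace}(\mathcal{E}_n^{-1}\Omega_n)$.

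The hard part is the replacement of the random Gram matrix $n^{-1}\mathcal{G}(x)^\top\mathcal{G}(x)$ by $\mathcal{E}_n$: since $r_n\to\infty$ this is a growing-dimensional random matrix, so the step requires the standard sieve regularity conditions — eigenvalues of $\mathcal{E}_n$ bounded above and away from zero, $r_n/n\to0$, and (because the ReLU features $g_k(x)=\max(0,\tilde{x}^\top\gamma_k)$ are unbounded) moment bounds on $\|x\|$ so that $\mathcal{E}_n$ and $\Omega_n$ exist and $\sup_x\|\mathcal{G}(x)\|$ grows slowly. A second delicate point is the time-series dependence in $\{(x_t,u_t)\}$: since $x_{t+1}$ may contain lagged values of $y$, one does not have $\mathbb{E}[u_t\mid x_1,\dots,x_n]=0$, so both the law of large numbers for the Gram matrix and the vanishing of the $U\Upsilon_n$ cross term must be obtained as $o(\cdot)$ statements under a martingale-difference (or strong-mixing) assumption on $u_t$ with respect to the filtration generated by current and past regressors, rather than as exact identities. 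With these ingredients in place, the remaining steps are the routine trace manipulations sketched above, and they reproduce the stated decomposition $\text{MISE}_n=\varphi_n^2+n^{-1}\mathrm{trace}(\mathcal{E}_n^{-1}\Omega_n)$.
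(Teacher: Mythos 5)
Your proposal is correct and follows essentially the same route as the paper's proof: the Hansen-style decomposition into the squared approximation error $\varphi_n^2$ plus a quadratic form in $\hat\beta_n-\beta_n$, with the cross term killed by the population orthogonality $\mathbb{E}[\mathcal{G}(x_t)'\upsilon_n(x_t)]=0$ and the variance term reduced to $n^{-1}\mathrm{trace}(\mathcal{E}_n^{-1}\Omega_n)$ via the sample-to-population Gram matrix replacement. If anything you are more careful than the paper, which silently drops the $\Upsilon_n$ contribution to $\hat\beta_n-\beta_n$ and does not flag the growing-dimension law of large numbers or the serial-dependence caveat that you correctly identify as the delicate steps.
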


\begin{figure}
	\centering
	\includegraphics[width=0.6\linewidth]{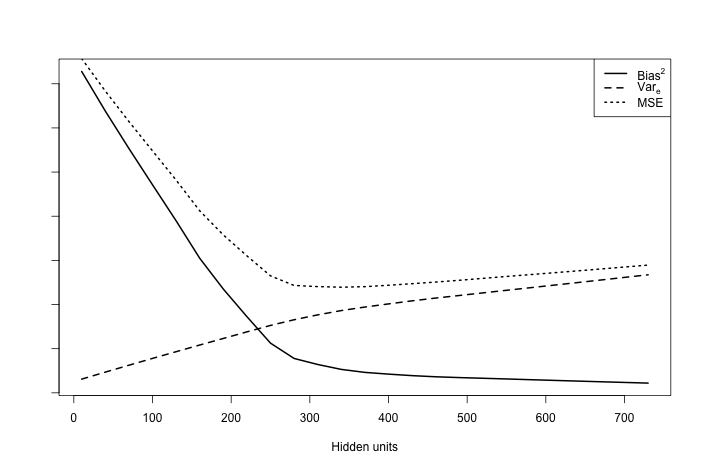}
	\caption[Bias-variance tradeoff for ANN sieves]{Pointwise MSE decomposition of an ANN Sieve estimator, with increasing number of hidden units, we can observe the standard bias-variance tradeoff for nonparametric methods. Data was generated from model \ref{apB:iid_model}.}
	\label{fig:bias_variance_decomposition}
\end{figure}

Given that we used a linear model, we note that for the single layer ANN, models with more hidden units will nest smaller (in terms of hidden units) models. If we then consider $r_n'>r_n$ the optimization problem for sieve order $r_n$ implicitly poses $\beta_k = 0$ for all $k=r_n+1,...,r_n'$ in the final hidden layer. Therefore, the expected approximation error will be at least weakly decreasing in the number of sieve terms $r_n$ (assuming the maximum of $\widehat{Q}_n$ is found) and hence $\varphi_{n}^2$ will decrease with increasing $n$. This is a well known property when working with the standard linear model. The second term corresponds to the asymptotic variance and the term in the trace will (weakly) increase in $r_n$, as the diagonal is weakly positive. Figure \ref{fig:bias_variance_decomposition} depicts the decomposed mean squared error\footnote{Which can be expressed as $\text{MSE}=\mathbb{E}[(f(x)-\hat{f}(x))^2]=\text{Bias}\hat{f}(x)^2 + \text{Var}(\hat{f}(x))$. For a derivation see for example \cite{geman_1992}.}. Given that ANNs were established as sieve estimators, an explicit bound can then be derived.

With the bias-variance tradeoff dependent on the number of sieve terms, $r_n$ becomes a hyperparameter that is similar to the choice of for example the bandwidth in kernel-type estimators. Similar to the optimal choice of the bandwidth in such estimators, the optimal number of $r_n$ is dependent on the true underlying function and hence we are not able to get an exact result with noisy data from an unknown distribution. Unfortunately and contrary to the case with kernel methods, there is no rule of thumb on how many sieve terms should be chosen. The from of the ANN sieve allows for regularization though. For example, $L_1$-regularization will effectively set some weights and some hidden nodes to zero. Instead of providing the network with the (unknown) correct number of sieve terms, we can simply provide a sufficiently large number of hidden nodes and let the algorithm choose the optimal non-zero number via eg. cross validation or out of batch prediction. This is often computationally less demanding than other procedures employed for finding the optimal terms for series estimation (such as the \textit{jackknive} of eg. \cite{hansen_2014_hb})

\subsection{A Partially Linear Model}\label{th:partially_linear_model}

In light of the finding in \cite{makridakis2020m4} that statistical approaches and Machine Learning methods seem to be complements rather than substitutes, we now propose a semiparametric model. A popular way to state such a semiparametric model is as a \textit{partially linear model}, which contains both a (finite dimensional) parametric part and a (possibly infinite) nonparametric part. In line with most financial data, the convergence results later on are derived for time-series data. As it is common throughout the literature, we impose that the data are only weakly dependent and follow the so-called $\beta$-mixing process (the i.i.d. case would be similar). 

\begin{definition}[$\beta$-mixing process]\label{def:beta_mixing}
	Let $\mathcal{M}_t^{t+\tau}$ denote the $\sigma$-field generated from the stochastic series ${Z_s}_{s=t}^{t+\tau}$. Define:
	$$
	b_{\tau} = \sup_{t \in \mathbb{N}} \sup_{A \in \mathcal{M}_{t+\tau}^{\infty}, B \in \mathcal{M}_{-\infty}^{t}} |\mathbb{E}[A|B] - \mathbb{E}[A]|
	$$
	The sequence ${Z_t}_{t=-\infty}^{\infty}$ is said to be $\beta$-mixing if $b_{\tau} \rightarrow 0$ as $\tau \rightarrow \infty$
\end{definition}

The mixing condition is imposed to ensure that the statistical dependence between two data points in time goes to zero as the temporal distance between them increases. The $\beta$-mixing condition is used throughout the literature because it permits to obtain faster convergence rates than for strong ($\alpha$)-mixing processes but still allows for a rich variety of economic time series (cf. \cite{chen_racine_2001}). Given the $\beta$-mixing condition, a partially linear model  can then be interpreted as a parametric model with time-varying intercepts driven by the nonparametric component (\cite{applied_ghysels_2018}). This specific form allows us to give a meaning to the nonparametric part of the equation as well. For a dependent variable $Y_t$ and a vector of covariates $M_t$, the formulation is:

\begin{equation}\label{eq:partially_linear}
	Y_t = \mathbb{E}[Y_t|M_t] + e_t = m(M_t) + e_t = x_t'\beta + \phi(z_t)  + e_t \quad t=1,...,n
\end{equation}

where the vector $M_t$ is partitioned into covariates used in the linear and nonlinear model components ($x_t$ and $z_t$ respectively). Restrictions can easily be imposed on the model. A popular choice is to impose additivity on the $z_t$ terms in the nonparametric component, in order to reduce the dimensionality of the problem (ie. set $\phi(z_t)=\phi(z_{1,t}) + \phi(z_{2,t}),...,\phi(z_{k,t})$). The nonparametric part of the equation then becomes the generalized additive model (\cite{hastie1990generalized}). The issue with such a restriction is that they require the model to be correctly specified in the case of (nonlinear) interactions between the input variables. Instead, we can also specify the partially linear model as an augmented ANN (cf. \cite{kuan_white_1994})

\begin{equation}\label{eq:augmented_nn}
	f(x, \theta)= x' \alpha + \beta_{0} + \sum_{j=1}^{r_n} \beta_{j} G(\tilde{x}'\gamma_j)
\end{equation}

Note that the $\theta$ vector now contains the $\alpha$ as well. Further, note that we did not impose a special form on the $\alpha$ vector. This allows us to include parametric nonlinear models as special cases. The model also shows a further advantage of using sieve estimators. Because we treat the nonparametric component as a parametric extension of an existing model, we can include it easily, even if the existing model is nonlinear (eg. in a logistic GLM - which could be used in a latent variable model). As discussed in the introduction, heuristically there seem to be better approximations properties in multilayer networks, which is why we propose to model the partially linear model from \eqref{eq:partially_linear} with a multilayer ANN-sieve as the nonparametric component:

\begin{equation}\label{eq:two_layer_ann}
	\phi_n(z_t; \theta)= \vartheta_0 + \sum_{r=1}^{k'}\vartheta_r G \big ( \sum_{j=0}^{k}\beta_{j}G(\tilde{z_t}'\gamma_j)+ \iota_r \big)
\end{equation}

Where we set $G(\tilde{x}'\gamma_0)=1$. The vector $\theta$ now contains $(d+1)k + (k'(k+1)) + (k'+1)$ parameters. The parameter $d$ represents the dimension of $z_t$ (the dimensionality used in the nonparametric part). The covariate vector $x_t$ can include lagged values of $Y_t$, such that the parametric part might be specified as a standard parametric autoregressive (AR) model. The combined model then becomes:

\begin{equation}\label{eq:semi_multi_ann}
	Y_t = x_t'\zeta + \vartheta_0 + \sum_{r=1}^{k'}\vartheta_r G \big ( \sum_{j=0}^{k}\beta_{j}G(\tilde{z_t}'\gamma_j)+ \iota_r \big) + e_t
\end{equation}

We note that we did not include a constant in the parametric part. This is due to the general fact that in a model of the form of \eqref{eq:partially_linear} the parametric should not contain a constant. Because the functional form of $\phi(z_t)$ is not specified, it can incorporate the constant which would make an identification impossible (\cite{nonparametric_econ_li}).\footnote{A closer inspection of \eqref{eq:semi_multi_ann} reveals that the constant will be absorbed by the bias unit in the final layer of the ANN.}

This model is extendable to even more than two hidden layers if it is desired. As described above, multilayer ANNs with eg. sigmoid type activation functions suffer from the vanishing gradient problem. Therefore, we propose the use of the ReLU function as nonlinear transformation. We will henceforth call the proposed model the SANN (semiparametric Artificial Neural Network). Figure \ref{fig:semi_hybrid_model_drawing} depicts the SANN in the network architecture as already seen before. The subsequent subsection discusses the convergence properties of the proposed estimator.

\begin{figure}
	\begin{center}
		\scalebox{0.6}{\begin{tikzpicture}

\tikzstyle{place}=[circle, draw=black, minimum size = 11mm]

\draw node at (0, -1.3) [place] (first_1) {$x_1$};
\draw node at (0, -2*1.3) [place] (first_2) {$x_2$};
\draw node at (0, -3*1.3) [place] (first_3) {$x_3$};
\draw node at (0, -4*1.3) [place] (first_4) {$x_4$};



\draw node at (3, -3.2*1.3) [place] (second_1) {$G_1$};
\draw node at (3, -5*1.3) [place] (second_2) {$G_k$};

\path (second_1) -- (second_2) node [black, font=\Huge, midway, sloped] {$\dots$};

\draw node at (7, -3.2*1.3) [place] (hidden_1) {$G_1$};
\draw node at (7, -5*1.3) [place] (hidden_2) {$G_{k'}$};

\path (hidden_1) -- (hidden_2) node [black, font=\Huge, midway, sloped] {$\dots$};

\draw node at (10, -1.3) [place] (linear_0) {$F$};

\draw node at (12, -1.3) [circle] (output_2) {$\hat{y}$};


\foreach \i in {3,...,4}
\foreach \j in {1,...,2}
\draw [->] (first_\i) to (second_\j);

\foreach \i in {1,...,2}
\draw [->] (first_\i) to (linear_0);

\foreach \i in {1,...,2}
\foreach \j in {1,...,2}
\draw [->] (second_\i) to (hidden_\j);

\foreach \i in {1,...,2}
\draw [->] (hidden_\i) to (linear_0);

\draw [->] (linear_0) to (output_2);

\node at (0, 0) [black, ] {Input};
\node at (0, -0.5) [black, ] {layer};
\node at (5, 0) [black, ] {Hidden};
\node at (5, -0.5) [black, ] {layers};
\node at (10, 0) [black, ] {Output};
\node at (10, -0.5) [black, ] {layer};

\end{tikzpicture}}
		\caption[Schematic drawing of SANN model]{Representation of a semi-(non)parametric augmented ANN with two hidden layers and two input variables each. Note that we are not restricted to separate the linear and nonlinear inputs.}
		\label{fig:semi_hybrid_model_drawing}
	\end{center}
\end{figure}
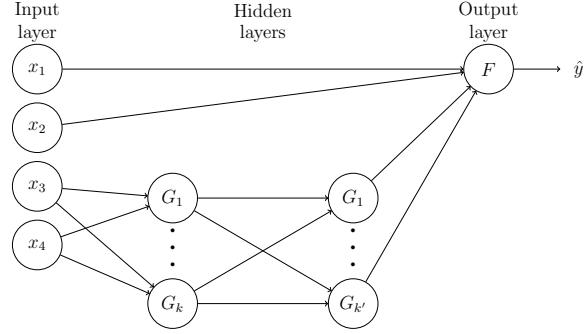

\subsection{Asymptotic Properties}\label{th:asymptotic_properties}

By employing the more general framework of sieve estimation, the asymptotic behaviour of the SANN can be derived. The interest here is twofold: First, convergence rates permit to compare the SANN sieve to other nonparametric methods which could help explain their popularity. And second, the rates will provides guidance of how fast the estimator will converge to the true function even if we only consider the nonparametric part as a nuisance term. 

\subsubsection{Convergence of the Parametric Part}

First we will discuss the convergence of the parametric part of the regression equation in \eqref{eq:semi_multi_ann}. We propose the use of a two-step procedure to estimate the model because it has several advantages. To begin with, it makes it easier to derive properties for the inference on the parametric part because we can use existing theory, and in addition, it makes use of the specific form of our ANN sieve. Note that we will follow the notation \eqref{eq:partially_linear} to keep the expressions concise (hence the derived $\beta$ corresponds to $\zeta$ in equation \eqref{eq:semi_multi_ann}). 

Although we have a linear and additive structure in the output layer, but the preceding hidden layer ensures that the nonparametric part is nevertheless approximated well by the ANN terms.\footnote{The next subsection will treat the convergence of the nonparametric part. For the moment, just suppose the estimates in the output layer are sufficiently close to the true values.} The idea is then to isolate the parametric component and estimate it with standard procedures. The simplest form to isolate $\beta$ would be by taking conditional expectations and differencing out from model \ref{eq:partially_linear}:

$$
Y_t - \mathbb{E}[Y_t|z_t] = (x_t -\mathbb{E}[x_t|z_t])'\beta + (e_t - \mathbb{E}[e_t|z_t])
$$
Then we define:
$$
\tilde{Y_t}=Y_t - \mathbb{E}[Y_t|z_t], \quad \tilde{x_t}=x_t - \mathbb{E}[x_t|z_t]
$$
Which allows us to provide the least squares solution as.
\begin{equation}\label{eq:infeasible_beta}
	\hat{\beta}_{\text{infeasible}} = (\tilde{X}'\tilde{X})^{-1}\tilde{X}'\tilde{Y}\;,
\end{equation}

where we used the standard matrix notation. We call this the infeasible estimator, as it would require the unknown quantities $\mathbb{E}[Y_t|z_t]$ and $\mathbb{E}[x_t|z_t]$. Because we can estimate the nonparametric part with our ANN, we are nonetheless able to derive a consistent estimate $\hat{\mathbb{E}}$. The specific form of the final layer makes the isolation of the parametric part convenient. By using partitioned regression (see appendix \ref{apA:parametric_part} for details) we are able to obtain the estimator for the parametric part:

\begin{equation}\label{eq:parametric_estimator}
	\hat{\beta} = (\tilde{X}^{*'}  \tilde{X}^{*} )\tilde{X}^{*'}  \tilde{Y}^{*}
\end{equation}
$$
\tilde{X}^{*}=X-\hat{\mathbb{E}}[X|Z], \quad \tilde{Y}^{*}=Y-\hat{\mathbb{E}}[Y|Z]
$$

If we impose a further condition, that the convergence rate of the nonparametric part is \textit{slower} than $n^{\frac{1}{2}}$, which is shown in the next section. Then under some fairly general restrictions, it can be shown, that the parametric part of a sieve M estimate is asymptotically normal and converging at the rate $\sqrt{n}$. As we have already established, that our ANN sieve is an M-estimator and that we estimate $\beta$ with a two step procedure, we refer the interested reader to \cite{handbook_econometrics_chen} theorem 4.1 and only state the result:

\begin{theorem}[Asymptotic normality for semiparametric two-step estimation]
	$$
	\sqrt{n}(\hat{\beta}-\beta_0) \xrightarrow[\text{}]{\text{dist}} \mathcal{N}(0, \Omega)
	$$
	$$
	\Omega = (\Gamma_{1}'W\Gamma_{1})^{-1}\Gamma_{1}'WV_1W\Gamma_{1}(\Gamma_{1}'W\Gamma_{1})^{-1}\;,
	$$
	where $\Gamma_1$ is the partial derivative of of a function $M(\beta, \phi_0) = 0$ iff $\beta=\beta_0$ and $W$ a weighting matrix such that a sieve estimation $M_n$ weighted with W (ie $M_n(\beta, \phi_0)'WM_n(\beta, \phi_0)$) is close to $M(\beta, \phi_0)'WM(\beta, \phi_0)$. See \cite{handbook_econometrics_chen} for details. We also note that this is similar to the asymptotic distribution of standard series estimators (cf. \cite{nonparametric_econ_li}, Ch 15). The fact that the parametric part of our model converges to a normal distribution allows us to employ statistical inference on the estimated coefficients.
\end{theorem}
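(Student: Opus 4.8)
The plan is to recognise the two-step SANN estimator $\hat\beta$ of \eqref{eq:parametric_estimator} as a just-identified sieve M-estimator of the type analysed in \cite{handbook_econometrics_chen}, Theorem 4.1, and then to verify that theorem's hypotheses in the present setting; the stated asymptotic normality and the sandwich form of $\Omega$ follow directly. Concretely, collect the nuisance functions as $\phi=(\phi_x,\phi_y)$ with $\phi_{x,0}(z)=\mathbb{E}[x_t\mid z_t=z]$ and $\phi_{y,0}(z)=\mathbb{E}[Y_t\mid z_t=z]$, write the partialled-out residuals $\tilde{x}_t(\phi)=x_t-\phi_x(z_t)$ and $\tilde{Y}_t(\phi)=Y_t-\phi_y(z_t)$, and define the moment function and its sample analogue by
\[
M(\beta,\phi)=\mathbb{E}\big[\tilde{x}_t(\phi)\big(\tilde{Y}_t(\phi)-\tilde{x}_t(\phi)'\beta\big)\big],\qquad M_n(\beta,\phi)=\frac{1}{n}\sum_{t=1}^n\tilde{x}_t(\phi)\big(\tilde{Y}_t(\phi)-\tilde{x}_t(\phi)'\beta\big).
\]
Under the identification condition that $\Gamma_1\equiv-\mathbb{E}[\tilde{x}_t(\phi_0)\tilde{x}_t(\phi_0)']$ is nonsingular --- the analogue for \eqref{eq:partially_linear} of the usual no-collinearity requirement --- one has $M(\beta,\phi_0)=0$ iff $\beta=\beta_0$, and the first step (the ANN sieve estimate $\hat\phi$, which by partitioned regression produces exactly the residuals $\tilde{X}^*,\tilde{Y}^*$) followed by the second step ($\hat\beta$ solving $M_n(\hat\beta,\hat\phi)=0$) is precisely the sieve two-step estimator of that framework, with weighting matrix $W=I$; a general $W$ is carried through the algebra only to reproduce the stated $\Omega$.

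First I would establish consistency of $\hat\beta$ from uniform convergence of $M_n(\cdot,\hat\phi)$ to $M(\cdot,\phi_0)$ near $\beta_0$ together with the identification above, using the $\beta$-mixing assumption (Definition \ref{def:beta_mixing}) for a uniform law of large numbers and the consistency of the ANN sieve $\hat\phi$ for $\phi_0$ implied by the universal approximation property (Theorem \ref{theorem:universal_approx_theorem}) and the sieve conditions $r_n^{-1}\to0$, $r_n/n\to0$. Next I would expand $0=M_n(\hat\beta,\hat\phi)$ around $(\beta_0,\phi_0)$,
\[
0=M_n(\beta_0,\phi_0)+\partial_\beta M_n(\beta_0,\phi_0)\,(\hat\beta-\beta_0)+\Delta_n[\hat\phi-\phi_0]+R_n,
\]
where $\Delta_n[\cdot]$ is the pathwise derivative of $M_n$ in the nuisance direction and $R_n$ is the second-order remainder. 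Three facts are then required: (i) $\partial_\beta M_n(\beta_0,\phi_0)\to_p\Gamma_1$, a law of large numbers for $\beta$-mixing arrays; (ii) $\sqrt{n}\,M_n(\beta_0,\phi_0)\to_d\mathcal{N}(0,V_1)$, where $V_1$ is the long-run variance of $\tilde{x}_t(\phi_0)e_t$ (collapsing to $\mathbb{E}[\tilde{x}_t(\phi_0)\tilde{x}_t(\phi_0)'e_t^2]$ when the error forms a martingale difference) --- this is a central limit theorem for $\beta$-mixing sequences and is the one place where the mixing \emph{rate} assumed in Section \ref{th:partially_linear_model} is used to control the covariance series; and (iii) the nuisance term $\Delta_n[\hat\phi-\phi_0]+R_n$ is $o_P(n^{-1/2})$.

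The main obstacle is (iii), and it rests on two ingredients. The first is a Neyman-orthogonality-type cancellation: because the moment is built from projection residuals, the pathwise derivative of the \emph{population} moment $M(\beta_0,\phi)$ with respect to $\phi$ vanishes at $\phi_0$ in every admissible direction (perturbations of $\mathbb{E}[x_t\mid z_t]$ and $\mathbb{E}[Y_t\mid z_t]$ enter orthogonally to $e_t$ and to $\tilde{x}_t$) --- the classical Robinson-type insensitivity of the partially linear moment to the nonparametric plug-in. The second is a stochastic-equicontinuity bound showing that the empirical counterpart of this derivative, evaluated along the estimated path $\hat\phi$, is $o_P(n^{-1/2})$ uniformly over the sieve space $\Theta_n$; I would obtain it from a maximal inequality for $\beta$-mixing empirical processes together with metric-entropy bounds for the ReLU ANN sieve classes (finite-dimensional at each $n$ since, by Theorem \ref{theorem:universal_approx_theorem}, the ReLU sieve has finitely many hidden units), combined with the convergence rate of $\hat\phi$ derived in the next subsection. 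The remainder $R_n$ is then of order $\varphi_n^2$ up to logarithmic factors, and the rate condition imposed there --- the nonparametric rate being of genuinely nonparametric (sub-$\sqrt{n}$) order yet fast enough that $\sqrt{n}\,\varphi_n^2\to0$ --- makes both pieces negligible.

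Finally, solving the linearisation gives $\sqrt{n}(\hat\beta-\beta_0)=-\Gamma_1^{-1}\sqrt{n}\,M_n(\beta_0,\phi_0)+o_P(1)$, so combining (i)--(iii) with Slutsky's theorem yields $\sqrt{n}(\hat\beta-\beta_0)\to_d\mathcal{N}\big(0,\,\Gamma_1^{-1}V_1(\Gamma_1')^{-1}\big)$; carrying the general weighting matrix $W$ through the same steps reproduces the stated sandwich $\Omega=(\Gamma_1'W\Gamma_1)^{-1}\Gamma_1'WV_1W\Gamma_1(\Gamma_1'W\Gamma_1)^{-1}$. In writing this up I would isolate the moment function, the nonsingularity of $\Gamma_1$, the mixing-rate condition and the sieve-rate condition as explicit assumptions, verify (i)--(iii), and then invoke \cite{handbook_econometrics_chen}, Theorem 4.1, for the conclusion.
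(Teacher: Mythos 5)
The paper does not actually prove this theorem: it states the result and refers the reader to \cite{handbook_econometrics_chen}, Theorem 4.1, the only supporting material being the partitioned-regression derivation of the form of $\hat\beta$ in Appendix \ref{apA:parametric_part}. Your proposal therefore goes well beyond what the paper supplies, and it does so correctly: the moment function built from partialled-out residuals, the identification of $\Gamma_1$ with $-\mathbb{E}[\tilde x_t(\phi_0)\tilde x_t(\phi_0)']$ (which is exactly what Assumption \ref{assumption:non_mix_between_npandp} is there to guarantee), the Robinson/Neyman orthogonality that kills the first-order effect of plugging in $\hat\phi$, the $\beta$-mixing CLT for $\sqrt{n}\,M_n(\beta_0,\phi_0)$, and the stochastic-equicontinuity control over the sieve class are precisely the hypotheses one must verify before invoking Chen's Theorem 4.1 in this setting. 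Two remarks. First, your rate condition $\sqrt{n}\,\varphi_n^2\to 0$ is the correct one (the nonparametric estimate must converge faster than $n^{-1/4}$); the paper instead phrases the requirement as the nonparametric part converging \emph{slower} than $n^{1/2}$, which is only the other half of the usual two-sided condition, and one can check that the rate of Proposition \ref{proposition:convergence_relu_sieves} satisfies your condition for every finite $d$ since its exponent $\tfrac{d+3}{4(d+2)}$ exceeds $\tfrac14$. Second, the metric-entropy bounds you need for the equicontinuity step are essentially those the paper establishes in Appendix \ref{apA:proof_convergence_sieves} for the nonparametric rate, so your argument can lean on that material rather than deriving them afresh; making that link explicit would be the main work in turning your sketch into a complete proof.
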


\subsubsection{Convergence of the Nonparametric Part}

What is left is the discussion of the convergence of the nonparametric of the SANN. For simplicity we will derive the upper bound for the single hidden layer ReLU ANN rather than the proposed multilayer network. This has two reasons. First, it allows us to directly compare the rates to the other ANN sieves that were discussed by eg. \cite{chen_racine_2001}. And second, a closer examination of the derivation in appendix \ref{apA:proof_convergence_sieves} reveals that the variance term in the trade-off function increases with additional parameters even if they are in differentl layers, without taking into account any potential benefit. This is due to the nature of the proof, where we derive a general \textit{upper bound} on the convergence. Nevertheless, the first point made above already merits a thorough discussion of the convergence rate of ReLU sieves. To get the convergence results, we first need to impose some technical assumptions (analogous to the assumptions imposed by \cite{chen_racine_2001}):

\begin{assumption}[Nature of the time series process]\label{assumption:beta_mixing}
	The considered time-series is stationary and $\beta$-mixing with $b_\tau\leq b_{0}\tau^{-\xi}, \xi > 2$
\end{assumption}

\begin{assumption}[Relationship between the parametric and nonparametric part] \label{assumption:non_mix_between_npandp}
	$\Delta x$ does not enter $\phi(z)$ additively and has full column rank. $z_t$ has compact support $\chi$ on $\mathbb{R}^{d}$
\end{assumption}

\begin{assumption}[Bounds on weights] \label{assumption:bounded_beta}
	We impose a limit on the weights of the $\beta$ coefficients in the ANN in equation \eqref{eq:semi_multi_ann}:
	$$
	\sum_{j=0}^{r_n}|\beta_j| \leq c_n, \quad \sum_{i=0}^{p} |\gamma_{i,j}| \leq c_{n_2},
	$$
	$$
	\sum_{i=0}^{p}|\gamma_{ij}x_i| \leq c_l, \implies \sum_{j=0}^{r_n}\beta_j\text{ReLU}(\gamma_{ij}x_i) \leq r_nc_nc_l 
	$$
\end{assumption}

Which ensures that our the function only maps to values within $\mathbb{R}$ and not $\bar{\mathbb{R}}$. Following \cite{HSWA_approximation} and \cite{handbook_econometrics_chen} we define a possible function space for $\phi$ in equation \eqref{eq:partially_linear}: Suppose that the true $\phi_0 \in \mathcal{P}\equiv \big \{ \phi \in L_2(\chi): \int_{\mathbb{R}^d}|\omega||\tilde{\phi}(\omega)|d\omega < \infty  \big \}$, where $\tilde{\phi}$ is the Fourier transform of $\phi$. That is, $\phi \in \mathcal{P}$ iff. it is square integrable and $\tilde{\phi}$ has finite first moments.

\begin{proposition}[Rate of convergence for ReLU-ANN sieves] \label{proposition:convergence_relu_sieves}
	Let $\hat{\theta}_n$ be the sieve M-estimate described by equation \eqref{eq:sieve_estimator}, $z_t \; \in \mathbb{R}^{d}$  and $\phi \in \mathcal{P}$. Suppose that assumptions \ref{assumption:beta_mixing}-\ref{assumption:bounded_beta} hold. By letting the sieve order grow according to $r_n^{2(1+\frac{1}{d+1})}\log(r_n)=\mathcal{O}(n)$ we achieve the following convergence rate for an ANN with the by assumption \ref{assumption:bounded_beta} modified ReLU activation function:
	$$
	||\hat{\theta}_n - \theta_0|| = \mathcal{O}\bigg([n/\log(n)]^{\frac{-(1+ \frac{2}{d+1})}{4(1+\frac{1}{1+d})}}\bigg)
	$$
\end{proposition}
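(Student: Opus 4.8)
The plan is to follow the standard two-component template for convergence rates of sieve M-estimators as developed in \cite{handbook_econometrics_chen} (and used for ANN sieves in \cite{chen_racine_2001}): the rate of $\|\hat{\theta}_n-\theta_0\|$ is the larger of the deterministic \emph{approximation error} $\varphi_n = d(\theta_0,\pi_n\theta_0)$ and a \emph{stochastic error} $\delta_n$ determined by the metric entropy of the sieve space $\Theta_n$, and I would show that the prescribed growth of $r_n$ is exactly what equates the two. As a first step I would quote the generic rate inequality for approximate sieve extremum estimates: under a local quadratic-curvature (identifiability) condition on $Q$ near $\theta_0$ — here supplied by uniqueness of the maximizer together with Assumption~\ref{assumption:non_mix_between_npandp} (full column rank, $\Delta x$ not additive in $\phi$), which makes the criterion locally strongly concave in the relevant directions — one has $\|\hat{\theta}_n-\theta_0\| = O_p(\max\{\varphi_n,\delta_n\})$, where $\delta_n$ is the smallest $\delta$ for which $n^{-1/2}\,\mathbb{E}\big[\sup\{|(\widehat{Q}_n-Q)(\theta)-(\widehat{Q}_n-Q)(\theta_0)| : \theta\in\Theta_n,\ \|\theta-\theta_0\|\le\delta\}\big] \lesssim \delta^2$. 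Because Assumption~\ref{assumption:beta_mixing} gives $\beta$-mixing with $b_\tau\le b_0\tau^{-\xi}$, $\xi>2$, the left-hand modulus is controlled by the Bernstein-type exponential inequality for $\beta$-mixing partial sums used throughout \cite{chen_racine_2001}, which reduces $\delta_n$ to an entropy-integral functional of $\Theta_n$ divided by $\sqrt{n}$.

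The next step is the approximation error. Since $\phi_0\in\mathcal{P}$ is square integrable with $\int|\omega|\,|\tilde{\phi}_0(\omega)|\,d\omega<\infty$, I would invoke the ReLU version of the Barron-type approximation bound (\cite{HSWA_approximation}, together with the single-hidden-layer ReLU estimate used in \cite{handbook_econometrics_chen}): there exists a one-hidden-layer ReLU network $\phi_{r_n}$ whose weights obey the $\ell_1$ bounds of Assumption~\ref{assumption:bounded_beta} with $\|\phi_0-\phi_{r_n}\|_{L_2(\chi)}\le C\,r_n^{-\frac12(1+\frac{2}{d+1})}$. Because the linear part of \eqref{eq:semi_multi_ann} converges at the faster $\sqrt{n}$ rate of the preceding theorem, the overall approximation error is dominated by the nonparametric component, so $\varphi_n = O\big(r_n^{-\frac12(1+\frac{2}{d+1})}\big)$.

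For the complexity term, under Assumption~\ref{assumption:bounded_beta} the sieve $\Theta_n$ consists of uniformly bounded, uniformly Lipschitz functions parametrized by $O(r_n)$ coefficients lying in a bounded set; since ReLU is $1$-Lipschitz, a standard covering-number bound for bounded-weight networks gives $\log N(\epsilon,\Theta_n,\|\cdot\|_{L_2})\le C\,r_n\log(c_n r_n/\epsilon)$. Substituting this into the entropy integral of Step~1 yields $\delta_n^2\asymp r_n\log(r_n)/n$ (with $\log r_n\asymp\log n$ under the growth rate below). It remains to balance. Writing $\kappa=1+\tfrac{1}{d+1}$ so that $1+\tfrac{2}{d+1}=2\kappa-1$, the condition $r_n^{2\kappa}\log r_n=O(n)$ gives $r_n\asymp(n/\log n)^{1/(2\kappa)}$, whence both $\varphi_n^2=O\big(r_n^{-(2\kappa-1)}\big)$ and $\delta_n^2=O\big(r_n\log n/n\big)$ equal $O\big((n/\log n)^{-(2\kappa-1)/(2\kappa)}\big)$ — they are of the same order, which is precisely why this $r_n$ is the optimal choice. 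Taking square roots, $\max\{\varphi_n,\delta_n\}=O\big([n/\log n]^{-\frac{2\kappa-1}{4\kappa}}\big)=O\big([n/\log n]^{-\frac{1+2/(d+1)}{4(1+1/(d+1))}}\big)$, which by Step~1 is the asserted rate for $\|\hat{\theta}_n-\theta_0\|$.

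The hardest part will be the approximation step: establishing the exponent $\tfrac12(1+\tfrac{2}{d+1})$ for ReLU approximation of the class $\mathcal{P}$. Barron's classical result for \emph{sigmoidal} units gives the dimension-free rate $O(r^{-1/2})$; for ReLU networks subject to the $\ell_1$ weight constraints of Assumption~\ref{assumption:bounded_beta} the rate instead carries the dimension-dependent correction $\tfrac{1}{d+1}$, so one must be careful about which function norm is used ($L_2$ on the compact support $\chi$), which Barron seminorm is relevant (the first-moment Fourier condition defining $\mathcal{P}$), and the fact that the weight bounds confine the modified ReLU to a bounded range. A secondary, more routine difficulty is checking that the $\beta$-mixing exponential inequality — and hence the entropy-integral bound of Step~1 — applies cleanly to this effectively bounded ReLU sieve; this is bookkeeping along the lines of \cite{chen_racine_2001}.
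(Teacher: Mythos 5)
Your proposal follows essentially the same route as the paper's proof: the same Chen--Shen/Chen (handbook, Theorem 3.2) decomposition $\|\hat{\theta}_n-\theta_0\|=\mathcal{O}(\max\{\varphi_n,\delta_n\})$, the same Chen--White/Barron-type deterministic approximation exponent $r_n^{-(\frac12+\frac{1}{d+1})}$ for the weight-constrained (clipped) ReLU sieve, the same $r_n\log(\mathrm{const}\cdot r_n/\epsilon)$ covering/bracketing-entropy bound exploiting the $1$-Lipschitz ReLU and the $\ell_1$ weight bounds of Assumption~\ref{assumption:bounded_beta}, and the same balancing $\delta_n\asymp\varphi_n$ yielding $r_n^{2(1+\frac{1}{d+1})}\log r_n=\mathcal{O}(n)$ and the stated rate. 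The only cosmetic difference is that you phrase the stochastic term through an expected-supremum/Bernstein inequality for $\beta$-mixing sums, whereas the paper verifies the bracketing entropy-integral condition directly; these are the same machinery, and your flagged difficulty (the ReLU approximation exponent) is exactly the step the paper resolves by checking the H\"older condition of Chen and White for the modified ReLU.
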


Note that this corresponds to what \cite{chen_racine_2001} derived for ANNs using activation functions that suffer from the vanishing gradient problem. When following through the proof in appendix \ref{apA:proof_convergence_sieves}, we note the following. Given assumption \ref{assumption:bounded_beta}, we can express the sieve order for multilayer networks with $k$ units in the first layer and $k'$ units in the second layer as $r_n=k*k'$ and the implication in the assumption would stay the same. 

\subsection{Comparison to Kernel-Density Estimation}

Given that ReLU ANNs can achieve the same upper bound on the convergence, this motivates the use of ReLU activations for the proposed SANN in \eqref{eq:semi_multi_ann}. The established convergence results also permit the comparison of the asymptotic convergence rate of the SANN-sieve to that of other popular nonparametric estimators. For example, taking the optimal convergence for a kernel estimator (see eg. \cite{lecture_notes_nonparametric}):

$$
\text{AMISE}_{\text{opt}}=\mathcal{O}(n^{\frac{-2\textnormal{order}}{(2\textnormal{order}+\textnormal{dimension})}})
$$

\begin{figure}
	\centering
	\includegraphics[width=0.45\linewidth]{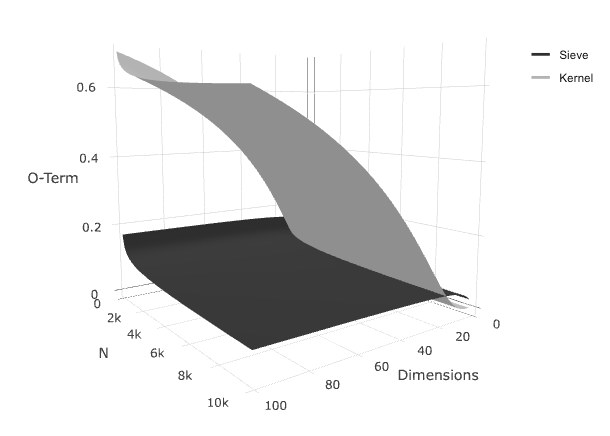}
	\caption[Optimal MISE kernel vs. sieve-type estimators]{Comparison of the optimal rate of convergence between multiple second order kernel- and sieve-type Estimators (for a single regressand).}
	\label{fig:convergence_sieve_kernel}
\end{figure}

When ignoring the constant, it becomes clear that in higher dimensional settings, the sieve estimator discussed above might have better convergence properties. Figure \ref{fig:convergence_sieve_kernel} provides a graphical illustration of the optimal rate of convergence for the two type of estimators (Where we used a second order kernel). The constant $M$ of our earlier definition of $\mathcal{O}$ is \textit{not} taken into account in this visualisation (although in the limit as $n \rightarrow \infty$ they will both converge to $0$ no matter the value of the constant). Therefore, we cannot compare the rates directly for finite samples. The illustration serves only to show that sieve estimators are less sensitive on the input dimension than kernel type estimators which in turn could explain some of the findings discussed in the introduction. To analyse how the convergence is affected in finite samples, Section \ref{sec:finite} provides results from finite-sample Monte Carlo studies. 

\section{Finite Sample Properties}\label{sec:finite}

The previous sections established ANNs as sieve estimators and as such, proposed an upper bound for the convergence. As the findings involved asymptotics, there might be large differences in finite samples. The present section compares ANNs to other nonparametric estimators by the means of mean squared error (MSE) and the help of simulations, given that the size of the data is finite. Definitions and formulas can be found in the Appendix. The simulations are run in the \texttt{R} Programming language and assess how the size of the dataset and the dimension (\texttt{N} and \texttt{d} in \eqref{proposition:convergence_relu_sieves} respectively) of the estimation affect the convergence as assessed by the MSE criterion. 

\subsection{Setup}

The simulations involving an ANN were done using the \texttt{keras} package (with a tensorflow backend). For ANN-type estimations, we chose enough epochs (passes of the dataset to the gradient descent algorithm) such that the loss levelled off. The optimization algorithm, number of hidden units and the regularization parameter were always chosen to be the same when comparing models. To compare the convergence results with regression, we use the package \texttt{np}. For the nonparametric kernel regression, we chose the local linear (LL) model with a second order gaussian kernel. Its theoretical convergence was briefly discussed in the previous section, its \textit{local} estimate has the form:

$$
\beta^{LL}(x_0)=(X'W_0X)^{-1}X'W_0Y\;,
$$

where $W_{0,i,i}=K(x_i,x_0,h)$ is the diagonal weighting matrix from the density estimates. When the \texttt{np} package is used, a bandwidth parameter is needed. We chose Silvermans adapted rule of thumb. For the univariate data it can be expressed as (eg. \cite{haerdle_sperlich_nonparametrics}):

$$
\text{bw}_{\text{silverman}}=\big(\frac{4 \hat{\sigma}^{5}}{3n}\big)^{\frac{1}{5}} \approx 1.06\hat{\sigma}n^{\frac{1}{5}} \;,
$$

where $\hat{\sigma}$ is the sample standard deviation. The \texttt{np} package calculates a slightly modified version for multivariate data: $\text{bw}_{\text{rule of thumb}} \approx 1.06 \sigma_j n^{\frac{1}{2P + l}}$. Where the $\sigma_j$ is calculated for every variable as $\min \{ \text{std.}, \text{mae.}/1.4826, \text{iqr.}/1.346 \}$, $P$ is the order of the kernel and $l$ the number of variables. We opted for the rule of thumb because it usually provides a good approximation to the optimal bandwidth and because other methods such as cross-validation (CV) often have impractically long calculation times for large datasets. Finally for the standard partially linear model a kernel-type estimator for the nonparametric part the package \texttt{PLRModels} was used. The package can only use a single variable as nonparametric component by default and does not provide a framework to implement out-of-sample predictions. 
For the simulations themselves, we specify the following setup: For each experiment, we will use $B=50$ MC iterations. Consider the general case of equation \eqref{eq:generic_timeseries_model}:

$$
Y_t = \psi(x_t)+ u_t, \quad t=1,...,n
$$

The considered $\psi(\cdot)$ can be either linear, partially linear or completely nonlinear. We then analyse the specific features of our ANN-sieve and the proposed model in equation \eqref{eq:semi_multi_ann}. For each simulation, we fix the data generating process (DGP) $\psi(\cdot)$ and leave it unchanged throughout the $B$ iterations. The noise term $u_t$ is drawn from a distribution and \textit{changed} for every iteration $b \; \in \{1,...,B\}$. Next, the relevant statistical metrics are calculated, which will change according to the specific issue analysed. Refer to appendix for details on those metrics. In general, we use pointwise metrics (calculated on the specific data-points) if we want to compare the performance of different estimators and integrated metrics when we compare the approximation properties. The integrated metrics are calculated by numerically integrating the pointwise metrics as suggested in \cite{handbook_econometrics_chen}.

\subsection{High Dimensional Estimation}\label{sim:high_dim}

The main property that we consider is the performance of ANN-sieves in a higher dimensional setting. We saw in Section \ref{sec:formal} that the $\mathcal{O}(\cdot)$ term for the ANN sieve is \textit{less} responsive to an increasing number of input parameters than that of kernel type estimators. From figure \ref{fig:convergence_sieve_kernel} we can visually infer that in settings where the input dimension is small, the kernel-type estimator seems to have a faster convergence. This advantage diminishes as the input dimension increases, which would be in line with the curse of dimensionality. We investigate the performance of the two estimators by generating data according to:

$$
y_i = \sum_{i=1}^{k}f_k(X_i) + \varepsilon_i, \quad i=1,...,500,\; k=1,...,15
$$
$$
\varepsilon_i \sim \mathcal{N}(0,7^2)\;,
$$

where $f_k$ are different nonlinear functions. Refer to table \ref{apB:functions_high_dim} for details. We then let the number of inputs that generate the true function increase (2,5,15) and for each generated function we add 1,10 and 20 noise terms. We estimate a single hidden layer ANN with 200 ReLU units and a local linear regression (kernel-type) on the data. To avoid having to choose the optimal number of sieve terms $r_n$, we use $L_1$-regularization in the hidden layer. For this and all subsequent simulations we only optimize the regularization parameters once and leave them unchanged throughout the simulations. We split the generated data into a 80/20 training-test set and calculate the RMSPE on the test data. The results are summarized in table \ref{table:high_dimensional_comparison}. 
%
\begin{table}[!htbp]
	\centering
	\begin{tabular}{@{\extracolsep{5pt}} ccccc} 
		\cline{3-5} \\[-1.6ex] 
		& & \multicolumn{3}{ c }{Relevant Predictors} \\ 
		\hline \\[-1.8ex] 
		& Noise & 2 & 5 & 15 \\ 
		\hline \\[-1.8ex] 
		\multirow{3}{*}{Nonparametric (LL)}
		& 1 &  $6.11^*$ & $18.31^*$ & $59.13$ \\ 
		& 10 & $28.98$ & $46.48$ & $79.30$ \\ 
		& 20 & $44.02$ & $66.65$ & $97.99$ \\  \hline \\[-1.8ex] 
		\multirow{3}{*}{ANN}
		& 1 & $22.33$ & $33.99$ & $41.86^*$ \\ 
		& 10 & $27.10^*$ & $36.85^*$ & $47.03^*$ \\ 
		& 20 & $34.87^*$ & $43.27^*$ & $51.72^*$ \\ \hline
		\\[-1.8ex] 
		\\[-1.8ex] 
	\end{tabular} 
	\caption[Performance of estimators across dimensions]{Comparison of the RMSPE across different estimators and data generating processes. The $^*$ denotes the best performing (in terms of RMSPE) model for the corresponding problem. For the first column of the ANN simulation we increased the learning rate slightly to make sure the loss levelled off. We also allowed for more training epochs for larger models. \label{table:high_dimensional_comparison} } 
\end{table} 

As expected, the kernel-type estimation performs really well when the function is generated by few terms and when few noise-terms are present. As the number of relevant predictors increases the ANN starts to perform (relatively) better. The lower sensitivity of the ANN against higher dimensions can also be seen with the increasing noise terms, where the RMSPE increase is relatively lower for the ANN. 

\subsection{Inference and Prediction Accuracy}\label{sim:inference}

A further advantage of the proposed partially linear model is that it allows clear inference on the parametrically specified part of the model which is an often sought-after property in financial modellin. By using the two step estimation procedure for the parametric part that we developed in section \ref{sec:formal}, we are able to obtain standard errors and a convergence to a distribution for our parametric estimates which is not possible in standard ANN estimation.

We will consider data generated from two different time series settings. One with, and another without temporal autoregression. For ease of interpretation and calculation, we will consider lower-dimensional series. We saw in the preceding section \ref{sim:high_dim} that a kernel-type estimator would probably provide more accurate results for such a problem, nevertheless, the present section allows us to illustrate the theoretical findings. Further, to the best of our knowledge, no general inference framework was yet proposed for semi-nonparametric ANN sieve estimators. The present section provides motivation for a development of such theory. We note that we allow all models the correct number of lags, as the focus of the analysis is not model selection but rather model comparison.

The first simulation is essentially a replication of example 11.11 in \cite{martin2012econometric}, who use it to illustrate the properties of a kernel estimator. We consider DGP of the form: 

$$
y_t = 2x_{1,t} + x_{2,t} + g(x_{3,t}) + u_t, \quad t=1,...,1250,
$$
$$
g(x_{3,t})=0.3\exp(-4(x_{3,t}+ 1)^2) + 0.7\exp(-16(x_{3,t}-1)^2),
$$
$$
u_t \sim \mathcal{N}(0,0.1), \quad x_{1,t}|x_{3,t} \sim \mathcal{N}(0.5x_{3,t}, 1),
\quad x_{2,t} \sim \text{student}_{df=4}, \quad x_{3,t} \sim \text{Unif}[-2,2]
$$

The parameters are then estimated with our proposed SANN from equation \eqref{eq:semi_multi_ann}, a kernel-based partially linear model, a benchmark linear model of the form $y_t=\beta_0 + \beta_1 x_{1,t}+\beta_2 x_{2,t} + \beta_3 x_{3,t}$ and a correctly specified model. To avoid having to choose the right sieve order for the SANN, we use $L_1$-regularization in the first hidden layer of the ANN. The prediction results, based on the RMSPE and the integrated bias and variance for the nonparametric part are reported in table \ref{table:prediction_int_semiparametric_model} (Model 1) and the nonparametric approximation is visualized in figure \ref{fig:fit_np_kernel_vs_sann}. From the results we can see that the kernel-type estimator outperforms the nonparametric approximation of the SANN, which is to be expected in such a low-dimensional problem. Nevertheless, the predictive performance of the SANN looks promising.

\begin{figure}
	\centering
	\includegraphics[width=0.45\linewidth]{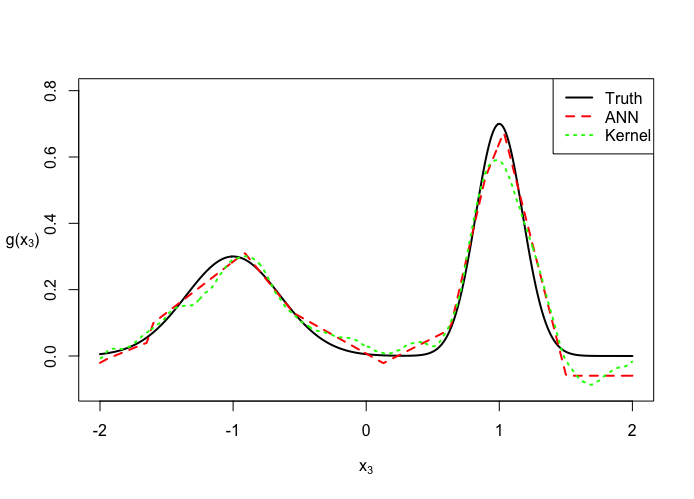}
	\caption[Nonparametric function approximation kernel vs. ANN-sieve]{Nonparametric function approximation for $x_3$ in Model 1.}
	\label{fig:fit_np_kernel_vs_sann}
\end{figure}

Next, we want to analyse the performance of the estimator in an autoregressive process time-series setting. For that, we consider the nonlinear multiple time series (which is closely related to the experiment of \cite{gao_nonlinear_timeseries}):

$$
y_t = \beta_1 v_{t-1} - \beta_2 v_{t-2} + \Big(\frac{x_{t-1} + x_{t-2}}{1 + x_{t-1}^2 + x_{t-2}^2}\Big)^2 + \varepsilon_t, \quad t=1,...,1250
$$
$$
\beta_1 = 0.47, \beta_2=-0.45
$$
$$
v_t = 0.55v_{t-1} - 0.42v_{t-1} + \delta_t, \quad x_t = 0.8\sin(2 \pi x_{t-1}) - 0.2\cos(2 \pi x_{t-2}) + \eta_t
$$
$$
\varepsilon_t \sim \mathcal{N}(0,0.5^2), \quad \delta_t,\eta_t \sim \text{Unif}[-0.5,0.5]
$$

We then run four different models on the data. The benchmark linear model, a hybrid ANN model, a fully nonparametric ANN model and the correctly specified model. Again, we use $L_1$-regularization to shrink the sieve order $r_n$ to the optimal level. Results are summarized in table \ref{table:prediction_int_semiparametric_model} (Model 2). We also present the distribution of $\beta_1$ and its standard errors in figure \ref{fig:distribution_parametric_estimate} ($\beta_2$ yielded similar results). Because the two series $v_t$ and $x_t$ are not correlated, the OLS prediction should provide unbiased and consistent estimates. We can infer that the SANN also provides accurate parameter estimates in addition to the increased predictive performance. Further, we can see that the standard errors from the SANN are overall lower than for the OLS estimation. This probably arises due to the efficiency gain from removing the nonparametric variance before estimating the parameters.

\begin{table}
	\centering
		\begin{tabular}{@{\extracolsep{5pt}} cccccc} 
			\\[-1.8ex]\hline \\[-1.8ex] 
			& Model 1 & Model 2 & Model 2a&$\int \text{Bias}^2$ & $\int \text{Var}_e$  \\
			\hline \\[-1.8ex] 
			True & $0.2482$ & $0.2450$ & & & \\
			Linear & $0.3647$ & $ 0.2714$ & & $ $ & $ $ \\ 
			ANN & $0.3434$ & $0.3034$ & & $ $ \\ 
			SANN & $0.3206$ & $0.2576$& $0.2609$ &$0.0012$ & $0.0072$ \\ 
			Kernel & $ $ & & &$0.0012$ & $0.0061$ \\ 
			\hline \\[-1.8ex] 
		\end{tabular} 
	 \caption[Performance of different semi-nonparametric models]{Performance statistics for semi-nonparametric models, based on the RMSPE for the 250 last observations (which were not used in the estimation process). For model 1 we also present the integrated squared bias and integrated variance of the SANN and Kernel-partially linear model. The performance metric is the RMSPE, it is not reported for the Kernel-PLM because the package does not support predictions. \label{table:prediction_int_semiparametric_model}}
\end{table}


Our simulation results also confirm what other empirical studies for time series have found before. The simple ANN provides worse results than the baseline linear estimate, even though we used regularization. On the other hand, our SANN which used the same amount of regularization and free parameters, improved our forecasting result. The improvement is not as large as in the previous experiment, which is probably due to the lower signal-to-noise ratio (also compare this to the correctly specified model, which only provides a $10.7\%$ improvement on the baseline linear model). Of course, it seems reasonable that a model which has the correct division of the parametric and nonparametric part, will outperform a purely linear or nonparametric model. Nevertheless, the results are motivating for the further use. We also estimated our SANN by providing all inputs to the linear and nonlinear part. Interestingly, the model still outperforms the baseline linear and ANN models (Model 2a) but, unsurprisingly, the performance is lower than that of the correctly specified SANN. 

\begin{figure}
	\begin{minipage}{0.5\linewidth}
		\centering
		{\includegraphics[width=0.9\linewidth]{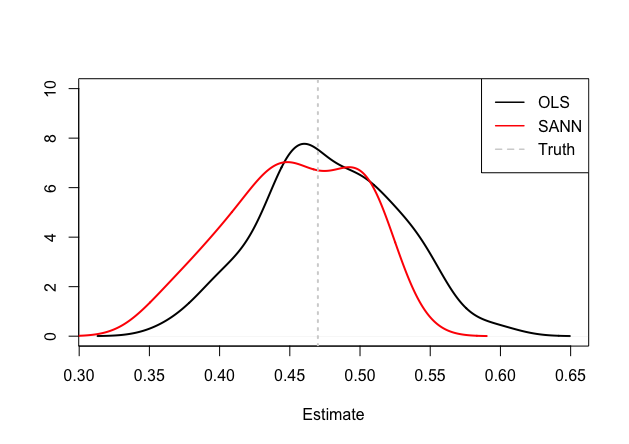}}
	\end{minipage}\hfill
	\begin{minipage}{0.5\linewidth}
		\centering
		{\includegraphics[width=0.9\linewidth]{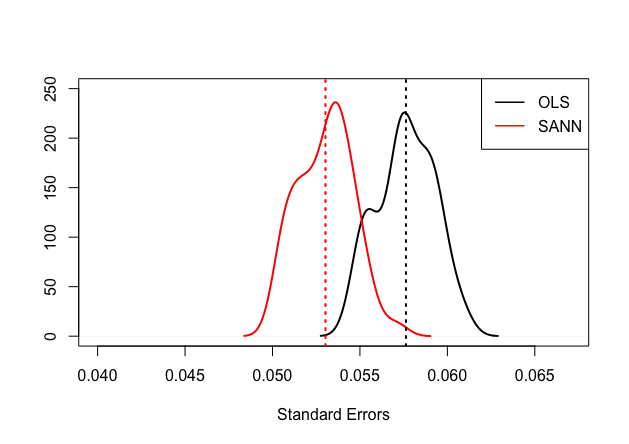}}
	\end{minipage}
		\caption[Distribution of Parametric Pstimate]{Distribution of estimates of the two parametric components in the stochastic time series simulation. Because there is no correlation between the parametric and nonparametric parts, a linear model should provide accurate estimates. \label{fig:distribution_parametric_estimate}}
\end{figure}

We conclude this section with a brief summary of the key findings. First, in higher dimensional settings, the ANN-sieve outperforms standard kernel regression. Second, when correctly specified, the SANN provides competitive results for function approximation and increases the efficiency of the parametric estimates compared to the linear baseline. Third, the SANN approach also seems to perform better than a pure ANN modelling approach, in line with what was found in \cite{makridakis2020m4}.

\section{Nonparameteric Portfolio VaR}\label{sec:portfolio}

The previous sections have derived the theoretical convergence and presented evidence on these properties with finite samples. The present section demonstrate its practical use and can be seen as a guideline on when sieve estimation via ANNs might be useful. 

We began the introduction with the argument that economic and financial time series often seem to have nonlinear components but that financial forecasting is often situated in high dimensional settings, as the diversification of portfolios involves many assets. This makes nonparametric estimation at the same time attractive, as it offers a way to avoid parametric misspecification but also unattractive because nonparametric estimators suffer from the curse of dimensionality. The last section illustrated, that the SANN can certainly model nonlinearities and might also perform better than other nonparametric approaches in higher dimensions.  

An important measure in statistical risk management is the Value at Risk (VaR). For a confidence level $\alpha \in (0,1)$ ($\text{VaR}_{\alpha}$) at horizon $h$, it is defined as the smallest value $p$ such that the probability of a loss ($L$) smaller than $p$ is at least $(1-\alpha)$. The definition corresponds to the $1-\alpha$ quantile of the loss distribution at time $t$ for the horizon $h$:

$$
\text{VaR}_{h,\alpha}=\inf \{p: \mathbb{P}(L_{t,t+h} \leq p)\geq (1-\alpha)\}
$$

For our results, we will set the forecast horizon $h$ to 1. The VaR$_\alpha$ is, for example, used to partially determine the reserves that a financial institution is required to hold. This opens up two aspects that are important from the modelling perspective. A well constructed estimate should reflect the risk of a loss greater than VaR$_\alpha$ accurately, but at the same time the estimate should also be as low as possible to avoid excessive reserves. 

\subsection{Semiparametric CAViaR}

A variety of estimation procedures exist for the estimation of the VaR. A specially interesting basic estimate is the CAViaR model from \cite{caviar_paper}. Instead of modelling the whole distribution of the return (and hence the loss) and then estimating the $(1-\alpha)$-quantile, the CAViaR approach attempts to model the VaR directly as an autoregressive function. The rationale behind this is the empirical finding that variances are autoregressive conditionally heteroskedastic. The VaR which is linked to the standard deviation of returns should hence also be conditionally linked to its previous observations. To specify a CAViaR model, let $y_t$ be a vector of returns, $\alpha$ the probability as described above, $x_t$ a vector of observable variables such as the lagged variance and $f_t(\theta) \equiv f_t(x_{t-1}, \theta_{\alpha})$ the $\alpha$-quantile of the distribution of portfolio returns at time $t$, parametrized by $\theta$. Following our notation from the last section, we propose to model the VaR as:

\begin{equation}\label{eq:semiparametric_caviar}
	f_t(\theta)= \sum_{j=1}^{p}\beta_j f_{t-j}(\theta) + \phi (x_{t-1})\; ,
\end{equation}

where $\phi(x_t)$ is a nonparametric function of lagged values, such as the variance, in the information set at $t-1$. Here, besides the predictive performance, we are interested in how much information can be drawn from past VaR observations. A significant coefficient would suggest that clustering of volatilities even has effects in the tails of the distribution (\cite{caviar_paper}). 

The model is simply an augmented version of the original CAViaR-GARCH model proposed in the original paper. Specified as a partially linear model, we can give some meaning to the VaR. It can be considered an autoregressive process driven by past VaR, with a time-varying intercept driven by an unknown process of the lagged variance. Because the aim of this section is the illustration of possible applications for the proposed estimator and not the discussion of statistical properties of the CAViaR model itself, we do not further engage in a theoretical discussion and refer the interested reader to \cite{caviar_paper}.

Because we are modelling a quantile, we need to adapt the optimization problem from equation \eqref{eq:sieve_estimator} to:
$$
\hat{\theta} = \arg \min_{\theta \in \Theta_n} \frac{1}{T}\sum_{t=1}^{T}[\alpha - \mathbbm{1}(y_t < f_t(\theta))][y_t-f_t(\theta)],
$$

where $\alpha$ is the quantile as defined before. Also note that we are no longer in a standard feed forward case, because the autoregressive VaR$_\alpha$-term is itself an estimation and needs to be fed back into the model. The autoregressive term is initalized as the unconditional $\alpha$-quantile of the training data. We let the lagged quantile enter linearly into our model, because even in such a case, the lagged value will contain the nonlinear transformations of the lagged variables in $x_{t-1}$. 

\begin{figure}
	\centering
	\includegraphics[width=0.7\linewidth]{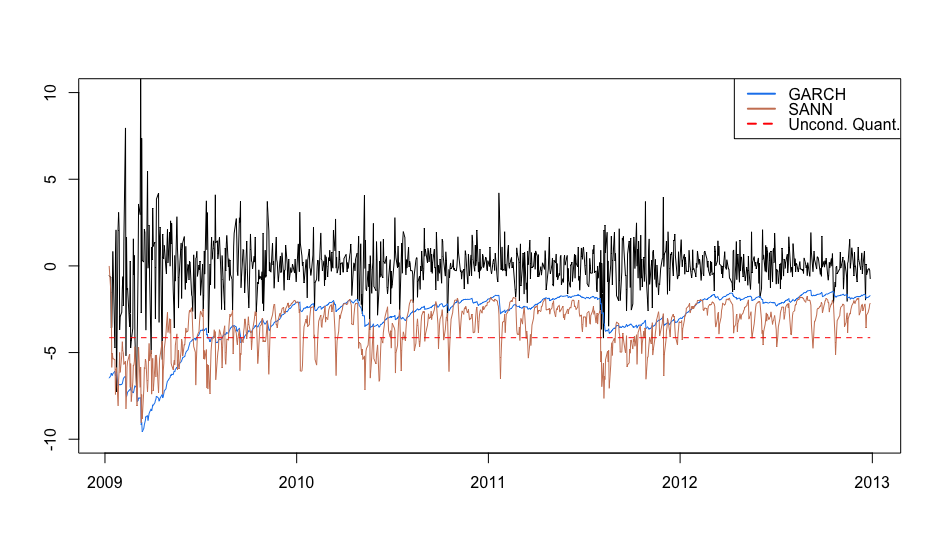}
	\caption[Univariate Semiparametric VaR estimates]{Out of sample prediction of the semiparametric VaR estimates from the SANN}
	\label{fig:testplotvarforecast}
\end{figure}

We estimate the model in equation \eqref{eq:semiparametric_caviar} for $p=0$ and $p=1$ with our SANN and compare its performance to that of a standard GARCH(1,1) model as well as the symmetric absolute value CAViaR. The data considered is the closing price of the GE-stock from 01.01.2000-31.12.2012. Deliberately, we include the stress period of the great recession and the European debt crisis, because we expect stress periods to have return distributions which are not in line with the normality assumption. We leave out the last 1000 observations for out of sample forecasting.

We then evaluate the forecasts with two statistical tests based on \cite{christoffersen_comparingVaR} and \cite{christoffersen_durationVaR}. First we check whether the number of exceedances of the VaR is in line with the expected number (which for $\alpha=0.01$ is 10). Further, if the model is correctly specified, the exceedances should be independent (otherwise there is still information left that could be incorporated into the model). A correctly specified model would fail to reject the null, that the model is correctly specified, in both cases. To assess the economic perspective, we numerically integrate the VaR estimates over the entire sample period and compare the percentage change of the value to that of the unconditional $\alpha$-quantile of the sample. Ideally, a good model would decrease the integrated VaR vis-a-vis the unconditional quantile. Finally, we report the estimate and standard errors for the autoregressive VaR$_\alpha$-term to check whether conditional heteroskedasticity is also relevant in the tails. The results are summarised in table \ref{table:univariate_caviar_results}.
%

\begin{table}[!htbp] \centering 
	\begin{tabular}{@{\extracolsep{5pt}} ccccc} 
		\hline \\[-1.8ex] 
		& GARCH(2,1) & SANN & SANN (recursive) & CAViaR \\ 
		\hline \\[-1.8ex] 
		Exceedances & $14$ & $11$ & $12$ & $14$ \\ 
		Failures Test & $0.23$ & $0.75$ & $0.54$ & $0.23$ \\ 
		Duration Test  & $0.15$ & $0.01$ & $0.17$ & $0.42$ \\ 
		Change $\int$ VaR$_\alpha$ & $-27.7\%$ & $-13.8\%$ & $-15.8\%$  & $-18.7\%$\\ 
		Coef. VaR$_{t-1}$ & $$ & $$ & $0.66$ & $0.82$ \\ 
		Std. Error & $$ & $$ & $0.19$ &  $$ \\ 
		\hline \\[-1.8ex] 
	\end{tabular} 
	\caption[Results of univariate VaR$_{0.01}$ models]{Results of univariate VaR$_{0.01}$ forecasting models. The recursive SANN allowed for a lag in the dependent variable. For the failures and duration tests the p-values are reported. The change towards $\int$VaR$_\alpha$ is compared to the integrated unconditional quantile. The VaR$_{t-1}$ coefficient and standard error are from the in-sample estimation.  	\label{table:univariate_caviar_results}} 
\end{table}

For the univariate case, all the models provide VaR-estimates that are in line with a correct specification once we allow for an autoregressive ($p=1$) term in the SANN. Both tests on the correct number of exceedances (failures) and the independence of the exceedances (duration) are not rejected, although the SANN has fewer exceedances. The parameter estimate and its significance are also in line with what was found in \cite{caviar_paper} and are similar for both the CAViaR and SANN. Yet, based on the statistical evidence, we cannot conclude that the SANN improves the results for the given stock and period. The economic gain from using the SANN to model the VaR is also smaller than that of the GARCH model, as the decrease of the integrated VaR is smaller. Further, as can be seen from the out of sample one-step ahead predictions in Figure \ref{fig:testplotvarforecast} the SANN seems to overfit the data to some extent (as seen by the unsmooth VaR estimate). For this specific problem, it does not seem that the use of (the more complicated) SANN is justified. 

\subsection{Comparison in Portfolio VaR}

Given the flexibility of the SANN structure, the model can easily incorporate a multivariate structure or the use of different correlated series. In the second part of this application, we will model the VaR of a portfolio instead of a single stock. Whereas the use of the SANN on a single series resulted in medicore results, the use in a higher dimensional setting with potentially nonlinear inter-series relationships might bear better results. We construct 50 randomly weighted portfolios of six assets, that are closer described in appendix \ref{apB:application_specification}.

\begin{table}[!htbp] \centering 
	\begin{tabular}{@{\extracolsep{5pt}} cccc} 
		\hline \\[-1.8ex] 
		& MGARCH(1,1) & SANN & CAViaR \\ 
		\hline \\[-1.8ex] 
		Exceedances & $23.02$ & $10.56$ & $22.20$\\ 
		Failures Test & $0.52$ & $0.02$ & $0.50$ \\ 
		Duration Test &  $0.12$ & $0.06$ & $0.16$\\ 
		Change $\int$ VaR$_\alpha$ & $-25.0\%$ & $-6.1\%$ & $-8.7\%$\\ 
		Coef. VaR$_{t-1}$ & & $0.46$ & $0.55$\\
		Std. Error & & 0.11 & \\
		\hline \\[-1.8ex] 
	\end{tabular} 
	\caption[Results of multivariate VaR$_\alpha$ models]{Results of VaR$_{0.01}$ out of sample forecasts on the portfolio level. The statistical tests were calculated for the 0.95 level and the figures represent the average rejection rate over the 50 constructed portfolios. \label{table:results_multivariate_application}} 
\end{table}

We compare the performance across the 50 portfolios and summarize the results in table \ref{table:results_multivariate_application}. Contrary to the univariate case, the MGARCH(1,1) model fails to provide accurate VaR$_{0.01}$ forecasts for the portfolios in 52\% of the cases (note that the table does \emph{not} report the p-values but the averages of the rejection rate), The performance for the CAViaR is similar to that of the MGARCH model. The SANN on the other hand provides very accurate results for the portfolio VaR. The models do not appear to have significant improvement possibilities (as all do no reject the duration test that would detect residual autoregressive information). As in the univariate case, the average decrease in the integrated VaR compared to the unconditional baseline is smaller than that of the MGARCH model and to some extent the CAViaR model. But this might reflect just the true risk better, especially in turbulent markets, where the MGARCH and CAViaR models frequently underestimated the Value at Risk. The findings themselves would indicated that there are some nonlinear relationships between the assets that are not properly account for with the more traditional models. The SANN can take this into account and at the same time is less sensitive to the dimension of the portfolio compared to other nonparametric estimators.

\section{Conclusion}\label{sec:conc}

The goal of the article was to illustrate in which circumstances the use of ANNs might bring about an increase in predictive performance with a focus on financial time-series. By incorporating ANNs into the framework of sieve estimation based on \cite{handbook_econometrics_chen} and extending the theoretical results of \cite{chen_racine_2001} we were able to derive the rate of convergence for ReLU based ANNs . It turns out, that the use of ReLU activation functions, allows to achieve the same rate of convergence as activation functions that suffer from the vanishing gradient problem, which further motivates their use. In light of recent studies, such as \cite{makridakis2020m4}, a general semiparametric model was then proposed and its bias variance trade-off and convergence properties analysed on a statistical, rather than a heuristic basis. With extensive monte-carlo simulations we provided evidence that ANN-based nonparametric estimates tend to converge better in higher dimensions than other nonparametric methods, such as kernel regression. Furthermore, inference on the parametric part of the SANN is also possible and provided promising results. Finally, the proposed SANN estimator performed better in out of sample forecasts than both the standard linear model or a fully nonparametric ANN by themselves. These results also held when we applied the model to real stock market data to estimate the Value at Risk of a portfolio. The findings are generally in line with the literature but especially so with what was found in \cite{makridakis2020m4}. ANNs can effectively exploit (nonlinear) relationships between different time-series and, as was shown in section \ref{sec:formal} theoretically and sections \ref{sec:finite} and \ref{sec:portfolio} with finite samples, seem to be less affected by high-dimensional problems than other nonparametric estimators.

In the statistical analysis of ANNs, much remains to be done. For example, the applications simply involved a basic linear component in the parametric part. But this approach could be extended to include richer parametric structures and error corrections. Since ANNs seem to perform better than other nonparametric techniques in higher dimensional settings, they are particularly attractive for portfolio modelling. Future research can focus on the many potential scenarios have not yet been explored, for example how ANNs behave in the presence of cointegration. Moreover, long-term dependencies arise frequently in financial or economic settings. Throughout this article we only used the simplest form of ANNs, but there exists a variety of ANNs that allow for recurrence relations in the data and could be further investigated from a statistical perspective

\bibliography{bibliography}

\section{Appendix}\label{sec:app}

\vspace{1cm}
\subsection{Derivation Bias-Variance Tradeoff} \label{apA:bias_variance}

We will derive the bias-variance tradeoff for ANNs with the methodology of \cite{hansen_2014_hb}.

\begin{proof}[Proof of proposition \ref{theorem:bias_variance}]
	Without loss of generality we will derive the mean integrated squared error for the sieve order $r_n$ and data set size $n$ ($\text{MISE}_n(r_n)$). The generalisation for stationary time series is then straightforward. Define the MISE as:
	
	$$
	MISE_n(x)= \mathbb{E} \int \big (\hat{\psi}_m(x) - \psi(x) \big )^2dx
	$$
	
	Contrary to the standard OLS case, we need to account for two sources of error: the approximation error and the regression error. From the definition of the approximation error in \ref{eq:definition_approximation_error}, we know that $\psi(x)=\upsilon_n + \mathcal{G}(x)\beta_n$. We can then rewrite the RHS of the equation as\footnote{For the ease of notation we omit the time index $t$ from the calculations. In this case assume that $\mathcal{G}(x)$ is the $(1 \times r_n)$ vector of transformed variables for $x_t$ in the first hidden layer and $u$ the corresponding noise term.}:

	\begin{equation*}
	\begin{split}
		\int \big (\hat{\psi}_n(x) - \psi(x) \big )^2f(x)dx & = \int \Big (\mathcal{G}(x)\hat{\beta}_n - (\mathcal{G}(x)\beta_n +\upsilon_{n}(x))\Big)^2 \\
		& = \int \upsilon_n(x)^2f(x)dx - 2(\hat{\beta}_n - \beta_n)'\int \mathcal{G}(x)'\upsilon_nf(x)dx \\
		& + (\hat{\beta}_n - \beta_n)' \int \mathcal{G}(x)'\mathcal{G}(x)f(x)dx (\hat{\beta}_n - \beta_n) \\
	\end{split}
	\end{equation*}
	
	We then use  $\mathbb{E}[\mathcal{G}(x)'\upsilon_n] =0$ (from the linear regression assumption) to let the second term be equal to 0. We use the definition of $\varphi_{n}^2$  and set $\mathbb{E}[\mathcal{G}(x)'\mathcal{G}(x)] \equiv \mathcal{E}_n$. Further by the property of $\mathbb{E}[x]=\mathbb{E}[trace(x)]$ for $x$ a scalar and the cyclical property of the trace we get:
	
	$$
	IMSE_n(x)  = \varphi_{n}^2 + trace\Bigg (\mathcal{E}_n \mathbb{E}\Big [(\hat{\beta}_n - \beta_n) (\hat{\beta}_n - \beta_n)' \Big ] \Bigg)
	$$
	
	Finally, define $\mathbb{E}[\mathcal{G}(x)'\mathcal{G}(x)\sigma_t^2] \equiv \Omega_n$ where $\sigma_t^2 = \mathbb{E}[u_t^2|x_t]$. Using the formula of the least squares estimate for $\hat{\beta}_n$ and the true model of the sieve estimate of order $n$, $y_t= \mathcal{G}(x_t)\beta_n + u_t$ we get the following (asymptotic) result:
	
	\begin{equation*}
		\begin{split}
			\mathbb{E}\Big [(\hat{\beta}_n - \beta_n) (\hat{\beta}_n - \beta_n)' \Big ] & = \mathbb{E} \Big [ \big ( (\mathcal{G}(x)'\mathcal{G}(x))^{-1}(\mathcal{G}(x)'\mathcal{G}(x)\beta_n) + (\mathcal{G}(x)'\mathcal{G}(x))^{-1}(\mathcal{G}(x)'u) - \beta_n \big) \\
			& \;\;\;\;\;\;\;\;\;  \big ( (\mathcal{G}(x)'\mathcal{G}(x))^{-1}(\mathcal{G}(x)'\mathcal{G}(x)\beta_n) + (\mathcal{G}(x)'\mathcal{G}(x))^{-1}(\mathcal{G}(x)'u) - \beta_n \big)'
			 \Big ] \\
			 & = \mathbb{E} \Big [ (\mathcal{G}(x)'\mathcal{G}(x))^{-1}\mathcal{G}(x)'uu'\mathcal{G}(x)  (\mathcal{G}(x)'\mathcal{G}(x))^{-1} \Big ] \\
			 & = n^{-1} \mathcal{E}_n^{-1}\Omega_n \mathcal{E}_n^{-1}
		\end{split}
	\end{equation*}
	
	Note that we used the sample equivalent in the last step. Together this yields:
	$$
	MISE_n(x)  = \varphi_{n}^2 + n^{-1}trace(\mathcal{E}_n^{-1} \Omega_n)
	$$
	
	Which is proposition \ref{theorem:bias_variance} for sieve order $r_n$.
	
\end{proof}

\subsection{Parametric Part of a Partially Linear Model}\label{apA:parametric_part}

\begin{proof}

Following our notation from above we denote $\mathcal{G}(Z)$ nonlinear transformed, $(n \times r_n)$ output matrix in the last layer and $X$ the $(n \times k)$ linear inputs. The optimization problem in the output layer is then similar to the standard OLS problem and we can then write it the equation as:

$$
A= 
\begin{bmatrix}
X & \mathcal{G}(Z)
\end{bmatrix}
,
\Lambda=
\begin{bmatrix}
\beta_1 \\
\beta_2
\end{bmatrix}
$$

\begin{equation}\label{eq:last_layer_equation}
	Y =  X\beta_1 + \mathcal{G}(Z)\beta_2 + \varepsilon = A \Lambda + \varepsilon
\end{equation}

Given the OLS solution to equation \ref{eq:last_layer_equation} we then transform the problem into the normal equations:

$$
\hat{\Lambda}_{OLS} = (A'A)^{-1}A'Y \implies (A'A)\hat{\Lambda}_{OLS}=A'Y
$$
Which by the above definition can be rewritten as:
$$
\begin{bmatrix}
	X'X & X'\mathcal{G}(Z) \\
	\mathcal{G}(Z)'X & \mathcal{G}(Z)'\mathcal{G}(Z)
\end{bmatrix}
\begin{bmatrix}
	\hat{\beta}_1 \\
	\hat{\beta}_2
\end{bmatrix}
=
\begin{bmatrix}
	X'Y \\
	\mathcal{G}(Z)'Y
\end{bmatrix}
$$

Note that the matrix $A$ needs to be invertible, which means that both $X$ and $\mathcal{G}(Z)$ need to have full column rank. ANNs (especially with ReLU activation functions) tend to have at least some columns equal to 0 in the final layer, therefore they need to be removed prior to the estimation.We can then follow \cite{greene_econometric} theorem 3.2 to obtain the estimate for $\beta_1$\footnote{In the case where $X$ and $\mathcal{G}(Z)$ are orthogonal, we could directly regress $Y$ on the corresponding matrices to get the estimate of $\beta_1$ and $\beta_2$. However, we treat the general case, where we would for example allow the ANN to contain all variables or the case where the sieve order was chosen without data driven methods.}

$$
	\hat{\beta_1} = (X'M_{\mathcal{G}}X)^{-1}(X'M_{\mathcal{G}}Y)
$$
Where $M_{\mathcal{G}}$ is the ''residual maker matrix'' that contains the residuals from a regression of $Y$ on $\mathcal{G}(Z)$:

\begin{equation*}
\begin{split}
	Y - \mathcal{G}(Z)\hat{\beta}_2 & = Y -  \mathcal{G}(Z)( \mathcal{G}(Z)' \mathcal{G}(Z))^{-1} \mathcal{G}(Z)'Y \\
	& = [I_n -  \mathcal{G}(Z)( \mathcal{G}(Z)' \mathcal{G}(Z))^{-1} \mathcal{G}(Z)']Y \\
	& = M_{\mathcal{G}}Y
\end{split}
\end{equation*}

Last, note that with the theory developed about ANN-Sieves we know that $\mathcal{G}(Z)$ will converge to the true function of $g(Z)$ and hence we can calculate the approximation of $\tilde{Y}=Y-\mathbb{E}[X|Z]$ and $\tilde{X}=X - \mathbb{E}[X|Z]$. As we only approximate the true function $g$ and hence $\tilde{X}$ and $\tilde{Y}$ we denote the approximations with $\tilde{X}^{*}=X-\hat{\mathbb{E}}[X|Z]$ and $\tilde{Y}^{*}=Y-\hat{\mathbb{E}}[Y|Z]$ respectively. Hence we get:

\begin{equation*}
	\hat{\beta}= (X'M_{\mathcal{G}}X)^{-1}(X'M_{\mathcal{G}}Y) = (\tilde{X}^{*'}\tilde{X}^{*})^{-1}\tilde{X}^{*'}\tilde{X}^{*}
\end{equation*}

By using the fact that $M_{\mathcal{G}}$ is symmetric and idempotent.

\end{proof}

\subsection{Proof of Proposition \ref{proposition:convergence_relu_sieves}} \label{apA:proof_convergence_sieves}
We follow a similar way as \cite{chen_racine_2001} and \cite{handbook_econometrics_chen} who prove the proposition for ANNs with smooth sigmoid and gaussian radial basis activation functions. We follow their methodology  state the following: In order for the sieve-estimate $\hat{\theta}_n$ to converge to the true value $\theta_0$ we need to minimize the approximation error $||\theta_0 - \pi_n\theta_0||$ to avoid asymptotic bias (because the sieve spaces are only in the limit equal to the true function space), but at the same time the sieve space should not be too complex to avoid overfitting. To measure the complexity of the sieve space we use the $L_2$ metric entropy with bracketing (see eg. \cite{chen_shen_1998} for a definition) of a class $\mathcal{F}_n=\{g(\theta, \cdot):\theta \in \Theta_n\}$. We denote this by $H_{[]}(\omega, \mathcal{F}_n, ||\cdot||_2)$. We can then state the same conditions for the convergence of sieve M-estimators as in \cite{chen_shen_1998}:

\begin{condition}\label{condition:beta_mixing}
	${Y}_{t=1}^n$ is a stationary $\beta$-mixing sequence, with $\beta_\tau \leq \beta_0\tau^{-\xi}$ for some $\beta_0 > 0, \xi = \gamma - 2 > 0$ (see definition of $\gamma$ below)
\end{condition}

\begin{condition}\label{condition:varepsilon_1}
	$\exists \; c_1 > 0$ such that for small $\varepsilon > 0$,
	$$
	\sup_{{\theta \in \Theta_n: ||\theta_0 - \theta|| \leq \varepsilon}} \textnormal{Var}\big(l(\theta, Y_t)- l(\theta_0, Y_t)\big) \leq c_1\varepsilon^2
	$$
\end{condition}

\begin{condition}\label{condition:delta_value}
	Let $\mathcal{F}_n=\{l(\theta, Y_t)-l(\theta_0, Y_t): ||\theta_0, \theta||_2 \leq \delta, \theta \in \Theta_n\} \; \exists \; \delta_n \in (0,1)$ and constants $c_2$ and $b>0 $ such that:
	$$
	\delta_n =  \inf  \bigg\{\delta \in (0,1): \frac{1}{\sqrt{n}\delta^2}\int_{b\delta^2}^{\delta} \sqrt{H_{[ ]}(\omega, \mathcal{F}_n, ||\cdot||_2)} d \omega \leq c_2  \bigg \} 
	$$
\end{condition}

\begin{condition}\label{condition:delta_2}
	For any $\delta > 0,  \exists \; s \in (0,2)$ such that
	$$
	\sup_{{\theta \in \Theta_n: ||\theta_0 - \theta|| \leq \delta}} |l(\theta, Y_t)- l(\theta_0, Y_t)| \leq \delta^sU(Y_t)
	$$
	for some function $U$ with $\mathbb{E}[(U(Y_t))^{\gamma}]$ for some $\gamma > 2$
\end{condition}

The conditions ensure that within a neighbourhood of $\theta_0$, $l(\theta, Y_t)$ is continuous at $\theta_0$ and that $||\theta_0 - \theta||^2$ behaves locally as the variance (cf. \cite{chen_shen_1998}). \cite{handbook_econometrics_chen} then gives the following theorem:

\begin{theorem}[\cite{handbook_econometrics_chen} Theorem 3.2]
	Let $\hat{\theta}_n$ be the approximate sieve M-estimator defined by \eqref{eq:approximate_sieve_est} and \eqref{eq:sieve_m_formula}, then under conditions 1-4 we get:
	$$
	||\theta - \hat{\theta}_n|| = \mathcal{O}(\max\{\delta_n, ||\theta_0 - \pi_n \theta_0||\})
	$$
	\label{theorem:convergence_sieve_ann}
\end{theorem}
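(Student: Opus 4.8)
The plan is to reproduce the empirical-process argument of \cite{chen_shen_1998} that underlies \cite{handbook_econometrics_chen} Theorem 3.2. Write $r_n \equiv \max\{\delta_n,\, \|\theta_0 - \pi_n\theta_0\|\}$ for the target rate and introduce the centred empirical process $\nu_n(\theta) \equiv \frac{1}{n}\sum_{t=1}^{n}\big[\big(l(\theta,Y_t)-l(\theta_0,Y_t)\big) - \mathbb{E}\big(l(\theta,Y_t)-l(\theta_0,Y_t)\big)\big]$, so that $\widehat{Q}_n(\theta)-\widehat{Q}_n(\theta_0) = \big[Q(\theta)-Q(\theta_0)\big] + \nu_n(\theta)$. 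The goal is to show $\|\hat\theta_n - \theta_0\| = \mathcal{O}_p(r_n)$ (which is the meaning of the $\mathcal{O}$ in the statement for a stochastic sequence) by a peeling argument: for a large constant $M$, decompose $\{\theta\in\Theta_n : \|\theta-\theta_0\| > Mr_n\}$ into shells $S_j = \{\theta\in\Theta_n : 2^{j-1}Mr_n < \|\theta-\theta_0\| \le 2^{j}Mr_n\}$, $j\ge 1$, and bound $\mathbb{P}(\hat\theta_n\in S_j)$.

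First I would record the basic inequality. Since $\pi_n\theta_0\in\Theta_n$ and $\hat\theta_n$ is an approximate maximizer in the sense of \eqref{eq:approximate_sieve_est} and \eqref{eq:sieve_m_formula}, we have $\widehat{Q}_n(\hat\theta_n)\ge \widehat{Q}_n(\pi_n\theta_0) - \mathcal{O}(\eta_n)$. Centring both sides turns this into $Q(\theta_0)-Q(\hat\theta_n) \le \big[Q(\theta_0)-Q(\pi_n\theta_0)\big] + \big[\nu_n(\pi_n\theta_0)-\nu_n(\hat\theta_n)\big] + \mathcal{O}(\eta_n)$. In the sieve-extremum framework the metric $\|\cdot\|$ is the one for which the population criterion has local quadratic curvature, $Q(\theta_0)-Q(\theta) \asymp \|\theta-\theta_0\|^2$ near $\theta_0$, and Condition \ref{condition:varepsilon_1} supplies the matching upper bound $\mathrm{Var}\big(l(\theta,Y_t)-l(\theta_0,Y_t)\big)\le c_1\|\theta-\theta_0\|^2$. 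Hence the deterministic approximation term is $Q(\theta_0)-Q(\pi_n\theta_0) = \mathcal{O}(\|\theta_0-\pi_n\theta_0\|^2) = \mathcal{O}(r_n^2)$, while on the event $\hat\theta_n\in S_j$ the left-hand side is at least $c\,(2^{j-1}Mr_n)^2$; this curvature gap must then be matched by the stochastic fluctuation $\sup_{\theta\in S_j}|\nu_n(\theta)-\nu_n(\pi_n\theta_0)|$.

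The heart of the argument is controlling this fluctuation via metric entropy. Over a shell of radius $\delta = 2^{j}Mr_n$ the class $\mathcal{F}_n = \{l(\theta,\cdot)-l(\theta_0,\cdot) : \theta\in\Theta_n,\ \|\theta-\theta_0\|\le\delta\}$ has $L_2$-radius $\mathcal{O}(\delta)$ by Condition \ref{condition:varepsilon_1} and an integrable envelope by Condition \ref{condition:delta_2}; its complexity is measured by the bracketing integral $J_n(\delta)\equiv\int_{b\delta^2}^{\delta}\sqrt{H_{[\,]}(\omega,\mathcal{F}_n,\|\cdot\|_2)}\,d\omega$ appearing in Condition \ref{condition:delta_value}. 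A maximal inequality for this empirical process then yields $\mathbb{E}\,\sup_{\theta\in S_j}|\nu_n(\theta)-\nu_n(\pi_n\theta_0)| = \mathcal{O}\!\big(n^{-1/2}J_n(\delta)\big)$. By the very definition of $\delta_n$ in Condition \ref{condition:delta_value}, the ratio $J_n(\delta)/(\sqrt{n}\,\delta^2)$ drops below $c_2$ at $\delta=\delta_n$ and, thanks to the subquadratic growth of the entropy integral, stays of smaller order for $\delta\ge r_n\ge\delta_n$, so the available fluctuation on each shell is a vanishing fraction of the curvature gap $c\delta^2$ once $M$ is large.

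Finally I would assemble the peeling bound: matching the curvature lower bound $\gtrsim(2^{j-1}Mr_n)^2$ against the mean fluctuation $\mathcal{O}(n^{-1/2}J_n(2^jMr_n))$, the decay of $J_n(\delta)/\delta^2$ makes the two irreconcilable except on an event whose probability a Markov or exponential tail bound renders geometrically small in $j$; summing over $j\ge 1$ and sending $M\to\infty$ gives $\mathbb{P}(\|\hat\theta_n-\theta_0\|>Mr_n)\to 0$, which is the claim $\|\hat\theta_n-\theta_0\| = \mathcal{O}(\max\{\delta_n,\|\theta_0-\pi_n\theta_0\|\})$. The main obstacle throughout is the maximal inequality under dependence: because the sample is only $\beta$-mixing, the bracketing-entropy bound familiar from the i.i.d.\ case must be recovered through a blocking-and-coupling construction, in which the coupling error is absorbed using the polynomial mixing decay $b_\tau\le b_0\tau^{-\xi}$ of Condition \ref{condition:beta_mixing} and the block-wise exponential bound needs the envelope moment $\mathbb{E}[U(Y_t)^{\gamma}]<\infty$, $\gamma>2$, of Condition \ref{condition:delta_2}, the two being tied together by $\xi=\gamma-2$. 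This dependence reduction, rather than the peeling bookkeeping, is where the genuine difficulty lies.
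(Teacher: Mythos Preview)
The paper does not actually prove this theorem: it is quoted verbatim as \cite{handbook_econometrics_chen} Theorem~3.2 and used as a black box in the proof of Proposition~\ref{proposition:convergence_relu_sieves}, where the work consists entirely of verifying Conditions~\ref{condition:beta_mixing}--\ref{condition:delta_2} for the ReLU sieve and then reading off the rate. Your proposal therefore goes strictly beyond what the paper does.

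That said, what you sketch is precisely the Chen--Shen argument the cited theorem rests on, and the sketch is essentially sound: the basic inequality from approximate maximization over $\Theta_n$ at $\pi_n\theta_0$, the local quadratic behaviour $Q(\theta_0)-Q(\theta)\asymp\|\theta-\theta_0\|^2$ paired with Condition~\ref{condition:varepsilon_1}, the shell decomposition, and the bracketing maximal inequality calibrated by the definition of $\delta_n$ in Condition~\ref{condition:delta_value} are exactly the ingredients. Your identification of the $\beta$-mixing blocking/coupling step as the genuinely delicate point, with the link $\xi=\gamma-2$ between the mixing exponent of Condition~\ref{condition:beta_mixing} and the envelope moment of Condition~\ref{condition:delta_2}, is also correct. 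One minor point: the local quadratic lower bound $Q(\theta_0)-Q(\theta)\gtrsim\|\theta-\theta_0\|^2$ is not one of the four numbered conditions as stated in the paper but is an additional hypothesis in \cite{chen_shen_1998}; you implicitly assume it, which is fine, but it is worth flagging since the paper's list of conditions is slightly abridged from the source.
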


Which will be useful in the proof.

\begin{proof}[Proof of proposition \ref{proposition:convergence_relu_sieves}]
	
	First, in order to use existing results, we note that by \cite{white_stinchcombe_weights} the ReLU function can be made to comply with the assumption of compact support of the activation function. For the theoretical considerations it is sufficient to think about a constant $c_l > 0$ such that we consider $f(x)=\min\{\max\{0,x,\}, c_l\}$. Which can be achieved by restricting the weights $\gamma_{i,j}$ suitably (note that under assumption \ref{assumption:bounded_beta} and for the single hidden layer model this is guaranteed). Under these assumptions, the ReLU-ANN sieve is (in the limit) dense in $\mathcal{P}$ by theorem \ref{theorem:universal_approx_theorem}.
	
	Throughout the proof assume that assumptions \ref{assumption:beta_mixing}, \ref{assumption:non_mix_between_npandp} and \ref{assumption:bounded_beta} hold. This implies that condition \ref{condition:beta_mixing} is directly satisfied.
	
	Further, conditions \ref{condition:varepsilon_1} and \ref{condition:delta_2} can be verified the same way as in \cite{handbook_econometrics_chen} example 3.2.2.
	
	To obtain the deterministic approximation error rate, note that we restricted the parameters and the maximum of our ReLU function, which means that the Hölder assumption in \cite{chen_white_1999} is fulfilled. Further it is clear that the ReLU function is \textit{not} homogenous (ie. $\text{ReLU}(\lambda x)\neq \lambda \text{ReLU}(x)$ in general). Then, by theorem 2.1, and lemmas A1, A2 from \cite{chen_white_1999} and the adjustment to our problem we obtain the deterministic approximation error rate :
	
	$$
	||\theta - \pi_n\theta|| \leq \textnormal{constant}*r_n^{\frac{1}{2}}\varepsilon_n(\mathcal{A})=\mathcal{O}(r_n^{-\frac{1}{2} - \frac{1}{d+1}})
	$$
	
	We note that our $constant$ in the equation above differs slightly from the one in \cite{chen_white_1999}, as we do not restrict the domain of our activation function to $(0,1)$ but to $[0,c_l]$. Nevertheless we obtain a constant, in this case assume that it is multiplied by $c_l$.
			
	What is left is to verify condition \ref{condition:delta_value}. To obtain the metric entropy with bracketing we employ \cite{shen_wong_1994} lemma 5 and the methodology of \cite{white_covering_ann} lemma 4.3. We summarise the results in the following claim:
	
	\begin{remark}[Bound on metric entropy with bracketing]
		An upper bound on the metric entropy with bracketing for sieve order $r_n$ and $\omega > 0$ can be expressed as:
		$$
			\mathcal{H}_B(\omega, \mathcal{F}_n, ||\cdot||_2) \leq \text{constant}*r_n\log(\frac{\text{constant}*r_n}{\omega})
		$$
	
	\end{remark}

		To prove the claim, we first note that by \cite{ossiander_1987} and \cite{handbook_econometrics_chen} (p. 5595) an upper bound on the bracketing metric entropy is sufficient and together with condition \ref{condition:delta_2}, we can write this upper bound as:
		
		$$
		\mathcal{H}_{[]}(\epsilon, \mathcal{F}_n, ||\cdot||) \leq \log N(\epsilon^{\small \frac{1}{s}}, \Theta_n, ||\cdot||) \leq \log N(\epsilon^{\small \frac{1}{2}}, \Theta_n, ||\cdot||)\;,
		$$
		
		 where $\mathcal{F}_n$ and $\Theta_n$ are as defined in condition \ref{condition:delta_value}. Note that we adapt the notation slightly by changing $\omega$ to $\epsilon$, to separate the claim from the rest of the proof. $N(\epsilon,\Theta_n, \rho)$ is the \textit{covering number} which is the cardinality of the smallest set of open balls with radius $\epsilon$ that is needed to cover (the set) $\Theta_n$. We denote the set of centers of these balls $T_\epsilon$ (which is also called an $\epsilon$-net). $\rho$ is the corresponding metric. This also illustrates the rationale behind condition \ref{condition:delta_value}. As the sieve space increases, the variance of the estimation increases. By controlling for $\delta_n$ in theorem \ref{theorem:convergence_sieve_ann}, we take that into account.
		 
		 For $T_\epsilon$ to be a valid $\epsilon$-net of $\Theta_n$, we must have that $\forall \; \theta \in \Theta_n \; \exists \; t_k \in T_\epsilon \text{ such that } \rho(\theta, t_k) < \epsilon$. We can then follow \cite{white_covering_ann} (proof of lemma 4.3), set $\eta>0$ and let $B_\eta \equiv \{b_k \in B, k=1,...,l\}$ and $G_\eta \equiv \{g_k \in G, k=1,...,q\}$ be $\eta$-nets for $B = \{\beta: ||\beta|| \leq c_n\} \; \subset \mathbb{R}^{r_n}$ and $G = \{\gamma: ||\gamma|| \leq r_nc_{n}\}\; \subset \mathbb{R}^{r_n(1 + d)}$ respectively (which we imposed with assumption \ref{assumption:bounded_beta}). We will also follow \cite{white_covering_ann} and employ the $L_1$-norm. Because we are in a finite dimensional space, it will hence also be an upper bound for, for example, the $L_2$-norm.
		 
		 Then, we denote $M_\eta=B_\eta \times G_\eta$ and let $\tilde{T}_\eta$ be the $\eta$-net for $\{\theta \text{ such that } \theta \in M_\eta \}$. We can then choose $\eta$ such that $\tilde{T}_\eta$ will be an $\epsilon$-net. By using our definition of $\theta$ from equation \eqref{eq:single_hidden_nn} ($\theta = (\beta', \gamma')'$) we have that for an arbitrary $\theta$ there will exist $\mu=(b', g')' \in M_\eta$ such that $||\beta - b||<\eta$ and $||\gamma - g|| < \eta$. Denote $t(\theta)$ the element of $\tilde{T}_\eta$ corresponding to $\mu$. We can then solve (again, analogous to \cite{white_covering_ann}):

		 \begin{equation*}
			 \begin{split}
			 |f_n(\tilde{x_t}, \theta)- f_n(\tilde{x_t}, t(\theta))|
			 & = |\sum_{i=1}^{r_n}\beta_i G(\tilde{x_t}\gamma_i) - \sum_{i=1}^{r_n}b_i G(\tilde{x_t}g_i) + \sum_{i=1}^{r_n}b_i G(\tilde{x_t}\gamma_i) -  \sum_{i=1}^{r_n}b_i G(\tilde{x_t}\gamma_i)| \\
			 & \leq |\sum_{i=1}^{r_n}(\beta_i-b_i) G(\tilde{x_t}\gamma_i)| + |\sum_{i=1}^{r_n}b_i(G(\tilde{x_t}\gamma_i)-G(\tilde{x_t}g_i))| \\
			 & \leq ||\beta - b||c_l + c_n(d+1)L||\gamma- g|| \\
			 & \text{where we note that our ReLU function is lipschitz continuous} \\
			 & \text{with lipschitz constant L=1 and we use standardized inputs x.} \\
			 & \leq \eta(c_l + c_n(d+1)) = \epsilon
			 \end{split}
		 \end{equation*}
		 
		 Which means we have to set $\eta=\epsilon/(c_l + c_n(d+1))$. We then have that $T_\epsilon \equiv \tilde{T}_{\epsilon/(c_l + c_n(d+1))}$ is an $\epsilon$-net for our ANN parameters. To calculate the covering number we can then use that the covering number of $\tilde{T}_\eta$ is bounded by the product of the covering number of its components $B_\eta$ and $G_\eta$. We can then employ lemma 5 from \cite{shen_wong_1994} for the $L_2$-norm, with our calculated $\eta$. Note that we adapt the $\delta$ to $c_n$ for $B$ and $c_nr_n$ for $G$ respectively (and transform $n$ term to $r_n$ and $r_n(d+1)$). For the $L_2$ norm, $N_1$ from lemma 5 \cite{shen_wong_1994} for $G$ can then be expressed as:
		 $$
		 N_1 \leq \frac{(\pi^{\frac{1}{2}} c_nr_n)^{r_n(d+1)}/\Gamma(\frac{c_nr_n}{2}+1)}{(\eta /\sqrt{r_n(d+1)})^{r_n(d+1)}}
		 $$
		 Which can then be solved by plugging in the calculated $\epsilon$ and approximating the solution with Stirling's formula. Finally, we note that only the sieve term $r_n$ and the radius of the n-ball $\epsilon$ are not fixed a priori. We can therefore summarize the other terms in the $constant$ and obtain:
		 
		 $$
		 	\mathcal{H}_{[]}(\epsilon, \mathcal{F}_n, ||\cdot||_2) \leq \text{const.}*r_n(d+1)\log(\frac{\text{const.}*r_nc_l(d+1)c_n}{\epsilon}) = \text{const.}*r_n\log(\frac{\text{const.}*r_n}{\epsilon})
		 $$
		 
		 Which is the same as claimed. The constant is positive and contains the $\frac{1}{s}$ term from \cite{handbook_econometrics_chen} (p. 5595). Note that this also corresponds to the bound on the entropy in \cite{chen_white_1999}, but the constant is multiplied additionally by $c_l$.

	Then in order to verify condition \ref{condition:delta_value}, we simplify the equation (where we leave out the $constant$ terms for ease of notation because they are positive and can hence be taken to the RHS of the inequality in condition \ref{condition:delta_value}):
	
	\begin{equation*}
	\begin{split}
		\frac{1}{\sqrt{n}\delta_n^2} \int_{b\delta_n^2}^{\delta_n}\sqrt{r_n\log(\frac{r_n}{\omega})}d\omega & \leq \frac{1}{\sqrt{n}\delta_n^2} \sqrt{r_n} \Big[ \big(\int_{b\delta_n^2}^{\delta_n} \log(\frac{r_n}{\omega})d\omega \big)^{\frac{1}{2}}(\delta_n-\delta_n^2b)^{\frac{1}{2}} \Big] \\
		& = \frac{1}{\sqrt{n}\delta_n^2}\sqrt{r_n}(\delta_n-\delta_n^2b)^{\frac{1}{2}} \Big[ (\delta_n-\delta_n^2b) \log(r_n) + \log(\delta_n^2b)\delta_n^2b  \\
		& - \log(\delta_n)\delta_n + (\delta_n-\delta_n^2b)\Big]^{\frac{1}{2}} \\
		& \leq \frac{1}{\sqrt{n}\delta_n^2}\sqrt{r_n}(\delta_n-\delta_n^2b)^{\frac{1}{2}} \Big[(\delta_n-\delta_n^2b)\log(r_n) - (\delta_n-\delta_n^2b)\log(\delta_n)\\
		& + (\delta_n-\delta_n^2b) \Big]^{\frac{1}{2}} \\
		& \bigg (\textnormal{By setting eg: } b= \frac{\log(\delta_n)-1}{\delta_n(\log(\delta_n) -1 -r_n)} \geq 0,  \forall r_n \geq 1 \bigg )\\
		& \leq \frac{1}{\sqrt{n}\delta_n^2}\sqrt{r_n}\sqrt{\log(r_n)}\delta_n
	\end{split}
	\end{equation*}
	
	Where we used the Cauchy-Schwarz theorem in the first inequality and the fact that $\delta_n \in (0,1)$. Note that we can choose the value for the constant $c_2$ to satisfy the condition \ref{condition:delta_value} (ie. $\delta_n \in (0,1)$). Rewriting the condition:
	$$
	\delta_n \geq c_2 \sqrt{\frac{r_n \log(r_n)}{n}}
	$$
	
	Finally, if we set $\delta_n=||\theta_0-\pi_n\theta_0||$ to balance bias and variance as suggested in \cite{chen_shen_1998}, we can establish a rate for the sieve order $r_n$:
	$$
	c_2 \sqrt{\frac{r_n \log(r_n)}{n}} \leq \delta_n=||\theta_0-\pi_n\theta_0|| \leq \text{constant}*r_n^{-(\frac{1}{2} + \frac{1}{d+1})}
	$$
	$$
	\implies r_n^{2(1 + \frac{1}{d+1})}\log(r_n) \leq \bigg( \frac{\text{constant}}{c_2} \bigg)^2 n = \mathcal{O}(n)
	$$
	
	The last part of the proof then consists of establishing the convergence of the sieve estimate based on $n$. It is sufficient to calculate the convergence of the dominant term in the relation above. Hence we can proceed in two steps:
	
	\begin{remark}[Dominant term]
		The dominant term of $r_n^{2(1+\frac{1}{d+1})}\log(r_n)$ is $r_n^{2(1+\frac{1}{d+1})}$
	\end{remark}

	To prove the claim we note the following, our dataset of size $n$ and the dimension of the input $(d+1)$ are both assumed to be $\geq 1$ (which otherwise would render the proof useless). Note then that both terms in the equation are continuous and strictly increasing in $r_n$. Further, $2(1+\frac{1}{d+1})=\text{constant}>2 \; \forall \; d < \infty$, define that constant as $c$ for ease of notation. To show that the dominant term is indeed $r_n^c$ we use the following:
	$$
	\lim_{r_n\rightarrow \infty}\frac{\log(r_n)}{r_n^c}=\lim_{r_n\rightarrow \infty}\frac{\frac{1}{r_n}}{cr_n^{c-1}}=\frac{0}{\infty} = 0
	$$
	Where we used l'Hôpitals Rule in the first step. We then claim the following:
	
	\begin{remark}[Behaviour of $r_n$]
		$$
		r_n = \mathcal{O}(\Big ( \frac{n}{\log(n)}\Big )^{\frac{1}{c}})
		$$
	\end{remark}

	To prove the claim we proceed in two steps, first consider the case where:
	$$
	r_n^c \leq \sqrt{n} \; \textnormal{then}: \frac{r_n^c}{\frac{n}{\log(n)}} \leq \frac{\sqrt{n}}{\frac{n}{\log(n)}} = \frac{\log(n)}{\sqrt{n}}
	$$
	By using the the same procedure as in the previous claim twice and the fact that $n \geq 1$, we have $\frac{\log(n)}{\sqrt{n}} \textnormal{ is continuous and }\rightarrow 0, \; \textnormal{as } n \rightarrow \infty$ and hence it is bounded by some constant $const._1$ which results in:
	$$
	r_n \leq \Big ({const._1 \frac{n}{\log(n)}}\Big )^\frac{1}{c}
	$$
	
	In the second case consider $r_n^c > \sqrt{n}$ then we have:
	$$
	\frac{r_n^c\log(n)}{2}=r_n^c\log(\sqrt{n}) < \text{constant}* r_n^c\log(r_n) \textnormal{ which we saw above is set as } = \mathcal{O}(n) 
	$$
	Noting that by the definition of $\mathcal{O}(n)$ this implies: $\frac{r_n^c\log(n)}{2} \leq \text{const.}_2 n$ and hence:
	$$
	r_n \leq \Big (2 \text{const.}_2 \frac{n}{\log(n)} \Big)^{\frac{1}{c}}
	$$
	Taking these conditions together yields:
	$$
	r_n \leq \max\{const._1^\frac{1}{c}, (2const._2)^\frac{1}{c} \} \Big (\frac{n}{\log(n)}\Big )^\frac{1}{c} = \mathcal{O}\bigg(\Big ( \frac{n}{\log(n)}\Big )^\frac{1}{c}\bigg)
	$$
	
	Then, because we set $\delta_n = ||\theta_0 - \pi_n\theta_0||$, we can combine the approximation error rate with the above and get the final convergence rate of:
	$$
	||\hat{\theta}_n - \theta_0|| = \mathcal{O}\bigg([n/\log(n)]^{\frac{-(1+ \frac{2}{d+1})}{4(1+\frac{1}{1+d})}}\bigg)
	$$
	
	Which completes the proof.
	
\end{proof}

\subsection{Metrics}
Where not else defined, the following metrics apply:
\vspace{3mm}

\noindent \textbf{Mean Squared Error} (MSE)
$$
\text{MSE}=\frac{1}{T}\sum_{t=1}^{T}(y_t - \hat{y_t})^2
$$

Note that when we run MC simulations, we observe the true function and the MSE is calculated on the true $y_t$. For applications we also define the MSPE, where we only train the model with data up to $T$ (ie. observations $T+1,...,T'$ are not used in the modelling phase)
\vspace{3mm}

\noindent \textbf{Mean Squared Prediction Error }(MSPE)
$$
\text{MSPE}=\frac{1}{(T'-T)}\sum_{t=(T+1)}^{T'}(y_{t} - \hat{y}_{t})^2
$$
\vspace{3mm}

\noindent \textbf{Squared Bias} ($\textnormal{Bias}^2$)
$$
\textnormal{Bias}^2=\big(\frac{1}{N}\sum_{i=1}^{n}(\hat{y}_{t,i})-y_t\big)^2,\;\;\; t=1,...,T
$$
Where $N$ is the number of simulations and $T$ number of observations.
\vspace{3mm}

\noindent \textbf{Variance of the estimator} (Variance$_e$)
$$
\textnormal{Var}_e=\frac{1}{N}\sum_{j=1}^{N}\big(\frac{1}{N}\sum_{i=1}^{n}(\hat{y}_{t,i})-\hat{y}_{t,j}\big)^2,\;\;\; t=1,...,T
$$

\subsection{Chaos Model}
To demonstrate the universal approximation property of the neural network, and demonstrate the ineptitude of the linear model the following process which is an adapted version of the one used in \cite{kuan_white_1994} was used to generate the datapoints:

\begin{equation}\label{apB:model_chaos}
	y_{t} = 0.3 y_{t-1} + \frac{22}{\pi}sin(2\pi y_{t-1} + 0.\overline{33})
\end{equation}

Below is a table that compares the performance on a quantitative level. Note that we did not yet need to care about overfitting as the function that was generated did not contain any noise but only the "true" data generating process.

The ARIMA model chose the correct linear specification with an AR(1). The Neural Network was implemented in \textit{Keras} (\cite{chollet2015keras}) with 50 hidden units.

\subsection{Irregular IID Model}

This model contains a highly nonlinear relationship with between the $x$ and $y$ variables. Additionally, white noise has been added to the process to investigate the bias-variance tradeoff. The model is generated according to:

\begin{equation}\label{apB:iid_model}
	y_i = 0.4(x_i-10)^3 + 0.1(\frac{x_i}{7})^7 + 600\sin(2x_i) + \mathbbm{1}_{x_i > 1}(-800\sin(2x_i) - 200) + \varepsilon_i
\end{equation}
$$
i = 1,...,400 \;\;\; x_i \sim \textnormal{Unif}([-10,10]), \;\;\; \varepsilon_i \sim \mathcal{N}(0, 16^2)
$$


\subsection{High Dimensional IID Model - 1}

The high dimensional model 1 is generated according to the process:
\begin{equation}\label{apB:high_dim_1}
	y_i = \sum_{i=1}^{k}f_i(X_i) + \varepsilon_i
\end{equation}
$$
k={2,5,10,15}, \;\;\; X_i=
\begin{cases}
	\sim \textnormal{Unif}([0,3]), & \text{relevant} \\
	\sim \textnormal{Unif}([-1,1]), & \text{otherwise}
\end{cases}, \;\;\;
\varepsilon_i \sim \mathcal{N}(0, 9^2)
$$

\begin{table}[!htbp] \centering
	\begin{tabular}{@{\extracolsep{5pt}} c|c} 
		\hline \\[-1.8ex] 
		& Function \\ 
		\\[-1.8ex]\hline 
		\hline \\[-1.8ex] 
		$f_1$ & $3.5\sin(x)$ \\ 
		$f_2$ & $8\log(\max(|x|,1))$ \\ 
		$f_3$ & $2x^4$ \\ 
		$f_4$ & $-0.4(x^2 + x^3 + 0.1\log(\max(|x|,0.5)))$ \\ 
		$f_5$ & $-4x^3$ \\ 
		$f_6$ & $7x^2$ \\ 
		$f_7$ & $2\log(\max(|x|, 0.3))^3$ \\ 
		$f_8$ & $|x|$ \\ 
		$f_9$ & $ -(0.9x^2 + x^3)/(\max(\sin(x)+2x^5,0.9))$ \\ 
		$f_{10}$ & $-4\cos(x)$ \\ 
		$f_{n > 10}$ & $x$ \\ \cline{1-2}
	\end{tabular} 
	\caption[Table of functions in high dimensional setting]{The functions generating the high dimensional model. If the number of relevant predictors exceeds $10$, the functions the subsequent predictors are just linear additive terms. Note that \textit{all} generated models contain the number of nuisance parameters indicated in addition to the gaussian error term.} 
	\label{apB:functions_high_dim} 
\end{table} 

Estimation is then based on all relevant variables plus the specified number of irrelevant components. For an overview of the used functions in the high dimensional model, refer to table \ref{apB:functions_high_dim}.

\subsection{Portfolio Construction and Testing}\label{apB:application_specification}

The Portfolio consists of three stocks and three other assets which are summarised in table \ref{table:appendix_stocks} (for their log-returns). Each asset is then weighted by a random weight under full investment constraint and no short selling (ie. for for every weight $w_i$ we have $w_i\geq 0$ and $\sum_{i=1}^{6}w_i=1$) to construct a portfolio. 

We estimate an MGARCH(1,1) model and an SANN with a CaViaR specification. A GARCH-type CaViaR model was not estimated due to the model structure of the \texttt{rmgarch} package. As before, we specify a DCC-model with a multivariate normal distribution. The SANN, contains 80 hidden ReLU units in the first layer and 5 ReLU units in the second. We allow for up to two lags of the squared returns of variables from the portfolio. The hyperparameter for the SANN $l_1$-penalization was only optimized once and then kept constant to save compute-time.

We then draw 50 sets of random weights for the portfolio and calculate the $VaR_\alpha$ for $\alpha=\{0.01, 0.05\}$. We then run a test based on \cite{christoffersen_comparingVaR} to test whether the proportion of exceedances in the forecasting period is significantly different from the expected proportion, where the null H$_0$ is that the exceedances are in accordance with the expected number. We also calculate a second test based on \cite{christoffersen_durationVaR}, to assess whether the violations are independent of each other. The null H$_0$ in this case is that the violations are independent. If we reject the null in such a scenario we have a strong indication that the model is not able to capture all the dynamics present in the data. Both tests are implemented in the \texttt{rugarch} package for R. 

Figure \ref{fig:traintestsplitportfolio}, depicts the train test split and the log-returns of a portfolio with equal weighting ($w_i=w_1=1/6$) of the assets. The beginning of the test data still has a higher volatility from the great recession and the European debt crisis in late 2011 (with the explosion of the long-term interest rate for Greece) is also visible. 

\begin{figure}
	\centering
	\includegraphics[width=0.7\linewidth]{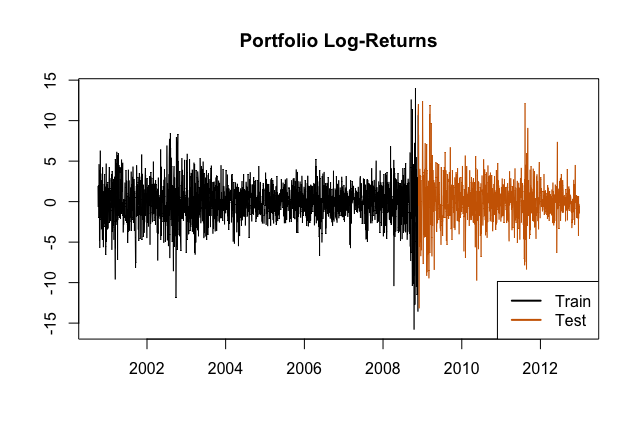}
	\caption[Train-Test split for portfolio VaR estimation]{Train-Test split for portfolio VaR estimation. The portfolio consists of all equally weighted assets. }
	\label{fig:traintestsplitportfolio}
\end{figure}

\begin{table}[!htbp] \centering 
	\begin{tabular}{@{\extracolsep{5pt}}lccccccc} 
		\\[-1.8ex]\hline 
		\hline \\[-1.8ex] 
		Statistic & \multicolumn{1}{c}{N} & \multicolumn{1}{c}{Mean} & \multicolumn{1}{c}{St. Dev.} & \multicolumn{1}{c}{Min} & \multicolumn{1}{c}{Pctl(25)} & \multicolumn{1}{c}{Pctl(75)} & \multicolumn{1}{c}{Max} \\ 
		\hline \\[-1.8ex] 
		GE & 2,992 & $-$0.032 & 1.312 & $-$8.434 & $-$0.566 & 0.548 & 11.052 \\ 
		Walmart & 2,992 & $-$0.017 & 0.435 & $-$2.483 & $-$0.242 & 0.195 & 3.056 \\ 
		Altria & 2,992 & $-$0.010 & 0.659 & $-$6.315 & $-$0.298 & 0.306 & 4.126 \\ 
		Brent & 2,992 & $-$0.002 & 0.993 & $-$8.485 & $-$0.543 & 0.562 & 7.696 \\ 
		Gold & 2,992 & $-$0.0001 & 1.006 & $-$7.177 & $-$0.492 & 0.538 & 7.906 \\ 
		USD/CHF & 2,992 & 0.004 & 1.003 & $-$6.928 & $-$0.565 & 0.564 & 12.452 \\ 
		\hline \\[-1.8ex] 
	\end{tabular} 
	\caption[Summary statistics of portfolio assets]{Summary statistics for the used assets over the time period 01.01.2000-31.12.2012. \label{table:appendix_stocks}}
\end{table} 

\begin{table}[!htbp] \centering 
	\begin{tabular}{@{\extracolsep{5pt}} ccccccc} 
		\\[-1.8ex]\hline 
		\hline \\[-1.8ex] 
		& GE & Walmart & Altria & Brent & Gold & USD\_CHF \\ 
		\hline \\[-1.8ex] 
		GE & $1.722$ & $0.242$ & $0.238$ & $0.101$ & $$-$0.021$ & $0.033$ \\ 
		Walmart & $0.242$ & $0.190$ & $0.065$ & $$-$0.020$ & $$-$0.021$ & $0.034$ \\ 
		Altria & $0.238$ & $0.065$ & $0.435$ & $0.022$ & $$-$0.015$ & $0.027$ \\ 
		Brent & $0.101$ & $$-$0.020$ & $0.022$ & $0.987$ & $0.145$ & $$-$0.177$ \\ 
		Gold & $$-$0.021$ & $$-$0.021$ & $$-$0.015$ & $0.145$ & $1.012$ & $$-$0.215$ \\ 
		USD/CHF & $0.033$ & $0.034$ & $0.027$ & $$-$0.177$ & $$-$0.215$ & $1.006$ \\ 
		\hline \\[-1.8ex] 
	\end{tabular} 
	\caption[Unconditional variance-covariance matrix of assets]{Unconditional variance-covariance matrix for the assets from the portfolio. \label{table:unconditonal_covariance_matrix} } 
\end{table} 

\begin{figure}
	\centering
	\includegraphics[width=0.7\linewidth]{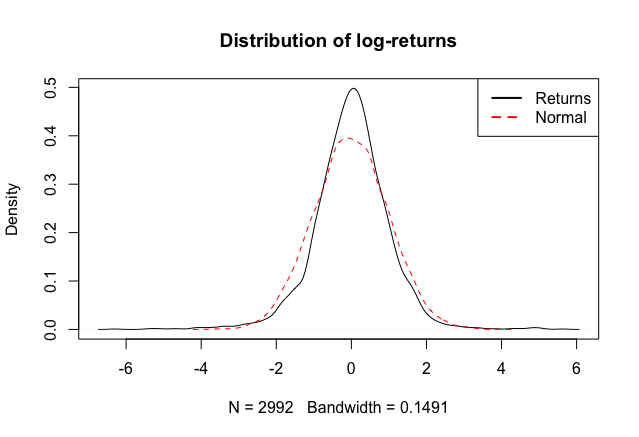}
	\caption[Return distribution of portfolio]{Return distribution of (scaled) log returns with a reference $\mathcal{N}(0,1)$ distribution. Clearly visible is the excess kurtosis of the returns.}
	\label{fig:distributionofreturns}
\end{figure}

\clearpage

\end{document}